\newcommand{\ie}{i.\,e.\xspace}
\newcommand{\etal}{et al.\xspace}
\newcommand{\bter}{BTER\xspace}
\newcommand{\h}[1]{\ensuremath{h(#1)}}
\newcommand{\cls}[1]{\textcolor{blue}{[CLS: #1]}\xspace}
\newcommand{\hmey}[1]{\textcolor{orange}{[HM: #1]}\xspace}
\newcommand{\mvl}[1]{\textcolor[rgb]{0,0.6,0}{[MvL: #1]}\xspace}
\newcommand{\rpr}[1]{\textcolor[rgb]{1.0,0.3,0.3}{[RP: #1]}\xspace}
\newcommand{\old}[1]{}
 \renewcommand{\cls}[1]{}
 \renewcommand{\hmey}[1]{}
 \renewcommand{\mvl}[1]{}
 \renewcommand{\rpr}[1]{}
\newcommand{\acosh}{\ensuremath\mathrm{acosh}}
\def\zeit{\number\shorthour:\ifnum\shortminute<10 0\number\shortminute
\else\number\shortminute\fi}
\begin{document}

\title{Fast generation of complex networks \\ with underlying hyperbolic 
geometry\thanks{This work is partially supported by SPP 1736 \emph{Algorithms for Big Data} of the German Research Foundation (DFG) 
and  by the Ministry 
of Science, Research and the Arts Baden-Württemberg (MWK)
via project \emph{Parallel Analysis of Dynamic Networks}. }
}
\institute{\{moritz.looz-corswarem, meyerhenke, roman.prutkin, christian.staudt\}@kit.edu\\Karlsruhe Institute of Technology (KIT), Germany}
\author{Moritz von Looz \and Henning Meyerhenke \and Roman Prutkin \and Christian L. Staudt}
\date{}
\maketitle

\begin{abstract}
Complex networks have become increasingly popular for modeling various real-world phenomena. 
Realistic generative network models are
important in this context as they avoid privacy concerns of real data and simplify complex 
network research regarding data sharing, reproducibility, and scalability studies.
\emph{Random hyperbolic graphs} are a well-analyzed family of geometric graphs.
Previous work provided empirical and theoretical evidence that this generative graph model
creates networks with non-vanishing clustering and other realistic features. 
However, the investigated networks in previous applied work were small, possibly due to the quadratic
running time of a previous generator.

In this work we provide the first generation algorithm for these networks with subquadratic running time. 
We prove a time complexity of $O((n^{3/2}+m) \log n)$ with high probability for the generation process.
This running time is confirmed by experimental data with our implementation.
The acceleration stems primarily from the reduction of pairwise distance computations through
a polar quadtree, which we adapt to hyperbolic space for this purpose. 
In practice we improve the running time of a previous implementation by at least two orders of magnitude
this way. Networks with billions of edges can now be generated in a few minutes.
Finally, we evaluate the largest networks of this model published so far. 
Our empirical analysis shows that important features 
are retained over different graph densities and degree distributions.\\[1.25ex]
\noindent \textbf{Keywords:} complex networks, hyperbolic geometry, generative graph model,
polar quadtree, network analysis
\end{abstract}

\section{Introduction}
The algorithmic analysis of \emph{complex networks} is 
a highly active research area since complex networks are increasingly 
used to represent phenomena as varied as the WWW, social relations, protein 
interactions, and brain topology~\cite{newman2010networks}. 
Complex networks have several non-trivial topological features: They are usually \emph{scale-free}, which refers to the presence of
a few high-degree vertices (hubs) among many low-degree vertices. 
A heavy-tail degree distribution that occurs frequently in practice follows a power law~\cite[Chap.~8.4]{newman2010networks}, \ie the number of vertices with degree $k$ is proportional to $k^{-\gamma}$, for a fixed exponent $\gamma > 0$.
Moreover, complex networks often have the \emph{small-world property}, \ie the typical distance between two vertices is surprisingly small, regardless of network size and growth.

\emph{Generative network models} play a central role in many complex network studies for several reasons:
Real data often contains confidential information; it is then desirable to work on similar synthetic networks instead.
Quick testing of algorithms requires small test cases, while benchmarks and scalability studies need bigger graphs. Graph generators can provide data at different user-defined scales for this purpose.
Also, transmitting and storing a generative model and its parameters is much easier than doing the same with a gigabyte-sized network.
A central goal for generative models is to produce networks which replicate relevant structural features of real-world networks~\cite{chakrabarti2006graph}.
Finally, generative models are an important theoretical part of network science, as they can improve our understanding of network formation.
The most widely used graph-based system benchmark in high-performance
computing, Graph500~\cite{graph500}, is based on R-MAT~\cite{chakrabarti2004r}. 
This model is efficiently computable, but has important drawbacks concerning realism
and preservation of properties over different graph sizes~\cite{KoPiPlSe14}.

\emph{Random hyperbolic graphs}, first presented by Krioukov \etal~\cite{Krioukov2010}, are a
very promising graph family in this context: They yield a provably high clustering coefficient~\cite{DBLP:conf/icalp/GugelmannPP12}, small diameter~\cite{raey} and a power-law degree distribution with adjustable exponent.
They are based on \emph{hyperbolic geometry}, which has negative curvature and is the basis for one of the three 
isotropic spaces. (The other two are Euclidean (flat) and spherical geometry (positive curvature).)
In the generative model, vertices are distributed randomly on a hyperbolic disk of radius $R$ and edges are inserted for every vertex pair whose distance is below $R$.\footnote{We consider the name ``hyperbolic unit-disk graphs'' as more precise, but we use ``random hyperbolic graphs'' 
to be consistent with the literature.}
This family of graphs has been analyzed well theoretically~\cite{bode2014probability,DBLP:conf/icalp/GugelmannPP12,raey} and Krioukov \etal~\cite{Krioukov2010} show that complex networks have a natural embedding in hyperbolic geometry.

Calculating all pairwise distances in their generation process has quadratic time complexity.
This impedes the creation of massive networks and is likely the reason previously published networks based on hyperbolic geometry have been in the range of at most $10^5$ vertices.
A faster generator is necessary to use this promising model for networks of interesting scales.
Also, more detailed parameter studies and network analytic comparisons are necessary in practice. 

\paragraph*{Outline and contribution.}
We develop, analyze, and implement a fast, subquadratic generation algorithm for random hyperbolic graphs.

To lay the foundation, Section~\ref{sec:related-work} 
introduces fundamentals of hyperbolic geometry.
The main technical part starts with Section~\ref{sec:fast-generation}, in which we use the Poincaré disk model to relate hyperbolic to Euclidean geometry.
This allows the use of a new spatial data structure, namely a polar quadtree adapted to hyperbolic space, to reduce both asymptotic complexity and running time of the generation.
We further prove the time complexity of our generation process to be 
$O((n^{3/2}+m) \log n)$ with high probability (whp, \ie $\geq 1 - 1/n$) 
for a graph with $n$ vertices, $m$ edges, and sufficiently large $n$.

In Section~\ref{sec:evaluation} we add to previous studies a comprehensive network
analytic evaluation of random hyperbolic graphs. 
This evaluation shows many realistic features over a wide parameter range.
The experimental results also confirm the theoretical expected running time. A graph with $10^7$ vertices and 
$10^9$ edges can be generated with our shared-memory parallel implementation in about 8 minutes.
The generator will be made available in a future version of NetworKit~\cite{7006796}, our open-source framework for large-scale network analysis.
\textbf{Material omitted due to space constraints can be found in the appendix.}

\section{Related Work and Preliminaries}
\label{sec:related-work}
Related generative graph models are discussed in Section~\ref{subsec:comparison-existing-generators},
where we also compare them to random hyperbolic graphs, in part by using empirical data.

\subsection{Graphs in Hyperbolic Geometry}
\label{sub:hyperbolic-introduction}
Kriokouv \etal~\cite{Krioukov2010} introduce the family of random hyperbolic graphs and show how they naturally develop a power-law degree distribution
and other properties of complex networks.
In the generative model, vertices are generated as points in polar coordinates $(\phi, r)$ on a disk of radius $R$ in the hyperbolic plane with curvature $-\zeta^2$.
We denote this disk with $\mathbb{D}_R$.
The angular coordinate $\phi$ is drawn from a uniform distribution over $[0,2\pi]$.
The probability density for the radial coordinate $r$ is given by~\cite[Eq.~(17)]{Krioukov2010} and controlled by a growth parameter $\alpha$:
\begin{equation}
 f(r) = \alpha\frac{\sinh(\alpha r)}{\cosh(\alpha R)-1}
 \label{eq:base-radial-distribution}
\end{equation}
For $\alpha=1$, this yields a uniform distribution on hyperbolic space within $\mathbb{D}_R$.
For lower values of $\alpha$, vertices are more likely to be in the center, for higher values more likely at the border of $\mathbb{D}_R$.

Any two vertices $u$ and $v$ are connected by an edge if their hyperbolic distance $\mathrm{dist}_{\mathcal{H}}(u,v)$ is below $R$.
(Kriokouv \etal also present a more general model in which edges are inserted with a probability depending on hyperbolic distance.
This extended model is not discussed here.)
The neighborhood of a point (= vertex) thus consists of the points lying in a hyperbolic circle around it.
Krioukov \etal show that for $\alpha/\zeta > \frac{1}{2}$, the resulting degree distribution follows a power law with exponent $2\cdot \alpha/\zeta +1$~\cite[Eq.~(29)]{Krioukov2010}.
Gugelmann \etal~\cite{DBLP:conf/icalp/GugelmannPP12} analyze this model theoretically and prove non-vanishing clustering and a low variation of the clustering coefficient.
Bode \etal~\cite{bode2014probability} discuss the size of the giant component and the probability that the graph is connected~\cite{raey}.
They also show~\cite{bode2014probability} that the curvature parameter $\zeta$ can be fixed while retaining all degrees of freedom, we thus assume $\zeta=1$ from now on.
The average degree $\overline{k}$ of a random hyperbolic graph is controlled with the radius $R$, using \cite[Eq.~(22)]{Krioukov2010}.

An example graph with 500 vertices, $R\approx 5.08$ and $\alpha=0.8$ is shown in Figure~\ref{fig:hyperbolic-graph-native}.
For the purpose of illustration in the figure, we add an edge $(u,v)$ where $\mathrm{dist}_{\mathcal{H}}(u,v) \leq 0.2\cdot R$.
The neighborhood of $u$ (the bold blue vertex) then consists of vertices $v$ within a hyperbolic circle (marked in blue).
\begin{figure}[bt]
\centering
\begin{subfigure}{0.45\linewidth}
\includegraphics[width=\linewidth]{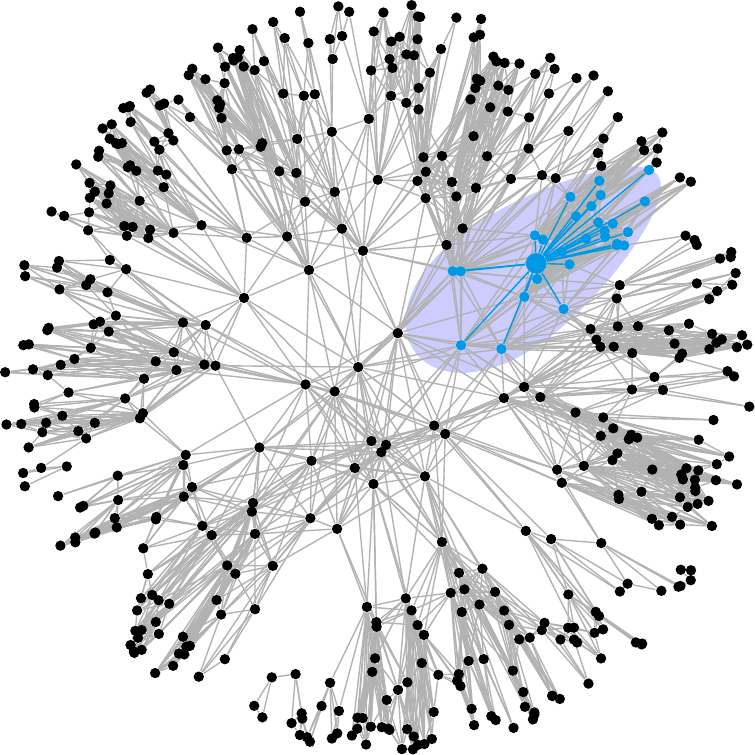}
\caption{Native representation (\cite{Krioukov2010}).}
\label{fig:hyperbolic-graph-native} 
\end{subfigure}
\quad
\begin{subfigure}{0.45\linewidth}
 \includegraphics[width=\linewidth]{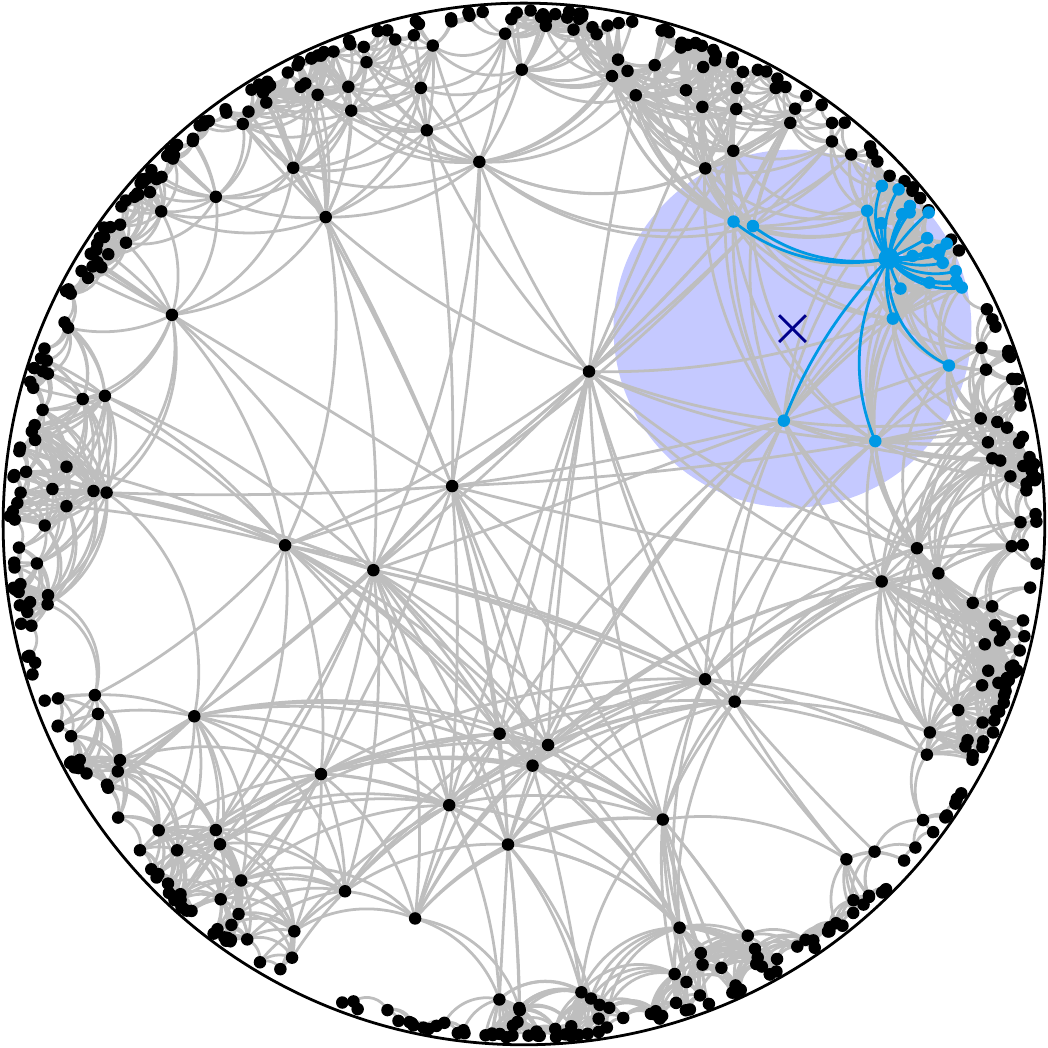}
 \caption{Poincaré disk model.}
 \label{fig:hyperbolic-graph-poincare}
\end{subfigure}
\caption{Comparison of geometries. Neighbors of the bold blue vertex are in a hyperbolic respective Euclidean circle.}
\label{fig:hyperbolic-graph-comparison}
\end{figure}

A previous generator code with a more general edge probability and quadratic time complexity is available~\cite{aldecoa2015hyperbolic}.
We show in Section~\ref{subsec:performance} that for the random hyperbolic graphs described above, our implementation is at least two orders of magnitude faster in practice.

\subsection{Poincaré Disk Model}
\label{sub:poincare}
The Poincaré disk model is one of several representations of hyperbolic space within Euclidean geometry and maps the hyperbolic plane onto the Euclidean unit disk $D_1(0)$.
The hyperbolic distance between two points $p_{E},q_{E}\in D_1(0)$ is then given by the Poincaré metric~\cite{Anderson2005}:
\begin{equation}
\mathrm{dist}_{\mathcal{H}}(p_{E}, q_{E}) = \acosh\left(1+2\frac{||p_{E}-q_{E}||^2}{(1-||p_{E}||^2)(1-||q_{E}||^2)}\right).
\label{eq:poincare-metric}
\end{equation}
Figure~\ref{fig:hyperbolic-graph-poincare} shows the same graph as in Figure~\ref{fig:hyperbolic-graph-native}, 
but translated into the Poincaré model. 
This model is conformal, \ie it preserves angles. More importantly for us, it maps hy\-per\-bolic circles onto Euclidean circles.

\section{Fast Generation of Graphs in Hyperbolic Geometry}
\label{sec:fast-generation}
We proceed by showing how to relate hyperbolic to Euclidean circles. Using this transformation, we are able to partition the Poincaré disk with a
polar quadtree that supports efficient range queries. We adapt the network generation algorithm to use this quadtree and prove subsequently that
it achieves subquadratic generation time.

\subsection{Generation Algorithm}
\label{sub:algorithm}
\paragraph{Transformation from hyperbolic geometry.}
\begin{wrapfigure}[13]{R}{0.35\linewidth}
\vspace*{-2\baselineskip}
\includegraphics[scale=0.8]{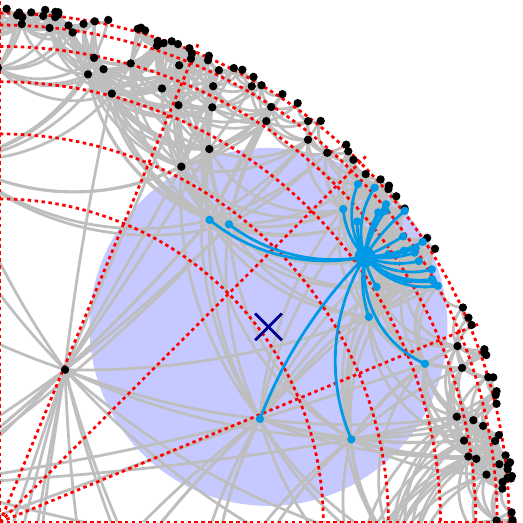}
\caption{Polar quadtree}
\label{fig:graph-polar-quadtree}
\end{wrapfigure}
Neighbors of a query point $u=(\phi_h,r_h)$ lie in a hyperbolic circle around $u$ with radius $R$.
This circle, which we denote as $H$, corresponds to a Euclidean circle $E$ in the Poincaré disk.
The center $E_c$ and radius $\mathrm{rad}_E$ of $E$ are in general different from $u$ and $R$.
All points on the boundary of $E$ in the Poincaré disk are also on the boundary of $H$ and thus have hyperbolic distance $R$ from $u$.
The points directly below and above $u$ are straightforward to construct by keeping the angular coordinate fixed and choosing 
the radial coordinates to match the hyperbolic distance:
($\phi_h, r_{e_1}$) and ($\phi_h, r_{e_2}$), with $r_{e_1}, r_{e_2} \in [0,1)$, $r_{e_1} \ne r_{e_2}$ and 
$\mathrm{dist}_{\mathcal{H}}(E_c, (\phi_h, r_{e})) = R$ for $r_e \in \{ r_{e_1}, r_{e_2}\}$.
It follows (for details see Appendix~\ref{sub:poincare-transformation}):

\begin{proposition}
$E_c$ is at ($\phi_h, \frac{2 r_h}{ab+2})$ and $\mathrm{rad}_E$ is $\sqrt{\left(\frac{2 r_h}{ab+2}\right)^2 - \frac{2 r_h^2 - ab}{ab+2}}$, with \linebreak
$a = \mathrm{cosh}(R)-1$ and $b = (1-r_h^2)$.
\label{proposition:poincare}
\end{proposition}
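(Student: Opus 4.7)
The plan is to exploit the rotational symmetry of the configuration around the origin of the Poincaré disk. Because rotation around the origin is an isometry of both the Euclidean and the hyperbolic metric, the hyperbolic circle $H$ of radius $R$ around $u = (\phi_h, r_h)$ is invariant under reflection across the radial ray at angle $\phi_h$. The same must then hold for its Euclidean image $E$, so the Euclidean center $E_c$ lies on this ray and we only need to determine its radial coordinate. Moreover, the full circle $E$ is already pinned down by its two intersection points with this ray, namely $(\phi_h, r_{e_1})$ and $(\phi_h, r_{e_2})$ as set up in the statement: $E_c$ is their Euclidean midpoint and $\mathrm{rad}_E$ is half their Euclidean separation.

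To find $r_{e_1}$ and $r_{e_2}$, I substitute $p = u$ and $q = (\phi_h, r_e)$ into the Poincaré metric from Eq.~\ref{eq:poincare-metric}. Since both points lie on the same radial ray, their Euclidean separation is simply $|r_h - r_e|$ and their squared norms are $r_h^2$ and $r_e^2$. Imposing $\mathrm{dist}_{\mathcal{H}}(u, (\phi_h, r_e)) = R$ and introducing the shorthands $a = \cosh(R) - 1$ and $b = 1 - r_h^2$, the distance equation collapses to
\[
ab \,(1 - r_e^2) \;=\; 2\,(r_h - r_e)^2,
\]
which after rearrangement is the quadratic $(ab+2)\,r_e^2 - 4 r_h\, r_e + (2 r_h^2 - ab) = 0$ in $r_e$.

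The last step is to apply Vieta's formulas to the two roots without solving them explicitly: their sum equals $4 r_h / (ab+2)$ and their product equals $(2 r_h^2 - ab)/(ab+2)$. Halving the sum reproduces exactly the claimed radial coordinate $2 r_h / (ab+2)$ of $E_c$. For $\mathrm{rad}_E$, I use the identity $(r_{e_1} - r_{e_2})^2 = (r_{e_1} + r_{e_2})^2 - 4 r_{e_1} r_{e_2}$, divide by four, and take the square root; the result is precisely the expression stated in the proposition.

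No step here is genuinely hard; the only real concern is keeping the algebra tidy so that the final form matches the one given. For completeness one should briefly remark (i) that $H$ does correspond to a full Euclidean circle rather than some degenerate figure, which follows from the hyperbolic--Euclidean circle correspondence of the Poincaré model noted in Section~\ref{sub:poincare}, and (ii) that the quadratic's discriminant is nonnegative whenever the configuration makes geometric sense, i.e., for any $u$ with $r_h \in [0,1)$; both factors $a$ and $b$ are then nonnegative, so the radicand is automatically well-defined. Both verifications are routine.
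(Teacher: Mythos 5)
Your proposal is correct and follows essentially the same route as the paper's derivation in Appendix~\ref{sub:poincare-transformation}: both reduce the Poincar\'e metric along the ray at angle $\phi_h$ to the same quadratic in $r_e$ and both justify that $E_c$ lies on that ray via the reflection-isometry argument (the paper's Lemma~\ref{lemma:poincare-ray}). The only cosmetic difference is that you read off the center and radius via Vieta's formulas rather than solving the quadratic explicitly as the paper does.
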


\paragraph{Algorithm.}
\begin{algorithm}[bt]
\KwIn{$n,\overline{k},\alpha$}
\KwOut{$G=(V,E)$}
  $R$ = getTargetRadius($n, \overline{k}, \alpha$)\tcc*{Eq.(22)\cite{Krioukov2010}}
  $V$ = $n$ vertices\;
  $T$ = empty polar quadtree of radius mapToPoincare($R$)\;
  \For{vertex $v\in V$}{
    draw $\phi[v]$ from $\mathcal{U}[0,2\pi)$\;\label{line:draw-angular}
    draw $r_{\mathcal{H}}[v]$ with density $f(r) = \alpha\sinh(r)/(\cosh(\alpha R)-1)$\label{line:draw-radial}\tcc*{Eq.(\ref{eq:base-radial-distribution})}
    $r_{E}[v]$ = mapToPoincare($r_{\mathcal{H}}[v]$)\;\label{line:poincare-mapping}
    insert $v$ into $T$ at $(\phi[v], r_{E}[v])$\;\label{line:quadtree-insertion}
  }
  \ForPar{vertex $v\in V$}{
    $C_\mathcal{H}$ = circle around $(\phi[v], r_{\mathcal{H}}[v])$ with radius $R$\; \label{line:circle-comp}
    $C_E$ = transformCircleToEuclidean($C_\mathcal{H}$)\label{line:query-circle-poincare}\tcc*{Prop. \ref{proposition:poincare}}
    \For{vertex $w\in T$.getVerticesInCircle($C_E$)}{ \label{line:neighborhood-loop}
      add $(v,w)$ to $E$\; \label{line:edge-insertion}
    }
  }
  \Return{$G$}\;
 \caption{Graph Generation}
 \label{algo:generation}
\end{algorithm}

The generation of $G=(V,E)$ with $n$ vertices and average degree $k$ is shown in Algorithm~\ref{algo:generation}.
As in previous efforts, vertex positions are generated randomly (lines \ref{line:draw-angular} and \ref{line:draw-radial}).
We then map these positions into the Poincaré disk (line~\ref{line:poincare-mapping}) and, as a new feature,
store them in a polar quadtree (line~\ref{line:quadtree-insertion}).
For each vertex $v$ the hyperbolic circle defining the neighborhood is mapped into the Poincaré disk 
according to Proposition~\ref{proposition:poincare} (lines~\ref{line:circle-comp}-\ref{line:query-circle-poincare})
-- also see Figure~\ref{fig:hyperbolic-graph-poincare}, where the neighborhood of $v$ consists of exactly the 
vertices in the light blue Euclidean circle.
Edges are then created by executing a Euclidean range query with the resulting circle in the polar quadtree
(lines~\ref{line:neighborhood-loop}-\ref{line:edge-insertion}).
The functions used by Algorithm~\ref{algo:generation} are discussed in Appendix~\ref{sec:functions}.
We use the same probability distribution for the node positions and add an edge $(u,v)$ exactly iff the hyperbolic distance between $u$ and $v$ is less than $R$.
This leads to the following proposition:

\begin{proposition}
 Algorithm~\ref{algo:generation} generates random hyperbolic graphs as defined in Section~\ref{sub:hyperbolic-introduction}.
\end{proposition}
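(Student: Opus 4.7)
The plan is to verify the two ingredients of the model separately: (i) the joint distribution of vertex positions produced by Algorithm~\ref{algo:generation} coincides with the distribution prescribed in Section~\ref{sub:hyperbolic-introduction}, and (ii) the set of edges inserted equals $\{(u,v) : \mathrm{dist}_{\mathcal{H}}(u,v) \leq R\}$. The first part is essentially by construction, while the second requires that the quadtree range query on line~\ref{line:neighborhood-loop} returns precisely the hyperbolic $R$-neighborhood of $v$.

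For (i), I would observe that lines~\ref{line:draw-angular} and~\ref{line:draw-radial} sample $\phi[v]$ from $\mathcal{U}[0,2\pi)$ and $r_{\mathcal{H}}[v]$ from the density in Eq.~(\ref{eq:base-radial-distribution}) independently for each vertex, which matches the definition of the model. The mapping to the Poincaré disk on line~\ref{line:poincare-mapping} is a bijective change of representation of the same hyperbolic point; it leaves the underlying hyperbolic coordinates $(\phi[v], r_{\mathcal{H}}[v])$ unchanged, so the joint distribution of the vertex positions is exactly the one required.

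For (ii), fix a vertex $v$ with hyperbolic position $u=(\phi[v], r_{\mathcal{H}}[v])$. The hyperbolic neighborhood of $u$ is the closed hyperbolic disk $H$ of radius $R$ around $u$, and Proposition~\ref{proposition:poincare} identifies a Euclidean circle $C_E$ in the Poincaré disk whose boundary is the image of $\partial H$. Because the Poincaré disk model is a faithful representation of the hyperbolic plane that sends hyperbolic circles to Euclidean circles, the image of the full disk $H$ is exactly the closed Euclidean disk bounded by $C_E$ (the model is continuous, so interior maps to interior and exterior to exterior). Hence a vertex $w$ satisfies $\mathrm{dist}_{\mathcal{H}}(u,w)\le R$ if and only if its Poincaré image lies inside $C_E$. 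Therefore, assuming that T.getVerticesInCircle returns exactly the vertices whose stored Poincaré positions lie in the queried Euclidean circle (a property of the quadtree that we take for granted here, but would verify when discussing the data structure), the edges inserted in lines~\ref{line:neighborhood-loop}--\ref{line:edge-insertion} are exactly those of the random hyperbolic graph on the sampled vertex set.

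The only delicate point is the correctness of the circle transformation in Proposition~\ref{proposition:poincare} and the orientation issue of identifying the interior of $H$ with the interior of $C_E$; the latter could in principle fail if $H$ were large enough to ``wrap around'' in some sense, but since $H$ is a genuine metric ball in the hyperbolic plane (which has no such topology) and the Poincaré disk is a homeomorphic model, no such pathology occurs. This is where I expect the bulk of the care to go, though the argument is conceptually routine once Proposition~\ref{proposition:poincare} is available.
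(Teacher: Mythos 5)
Your proposal is correct and follows the same two-part decomposition the paper itself relies on: the paper justifies the proposition in a single sentence by noting that the node positions use the same distribution and that edges are added exactly iff the hyperbolic distance is below $R$ (with the circle transformation of Proposition~\ref{proposition:poincare} and the quadtree query supplying the latter). Your version merely spells out the interior-to-interior argument for the Poincar\'e mapping and the assumed correctness of the range query, which the paper leaves implicit.
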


\paragraph{Data structure.}
\label{sub:data-structure}
As mentioned above, our central data structure is a polar quadtree on the Poincaré disk.
While Euclidean quadtrees are common~\cite{Samet:2005:FMM:1076819}, we are not aware of previous adaptations to hyperbolic space.
A node in the quadtree is defined as a tuple $(\mathrm{min}_\phi, \mathrm{max}_\phi, \mathrm{min}_r, \mathrm{max}_r)$
with \(\mathrm{min}_\phi \leq \mathrm{max}_\phi\) and \(\mathrm{min}_r \leq \mathrm{max}_r\).
It is responsible for a point $p = (\phi_p, r_p) \in D_1(0)$ iff
$(\mathrm{min}_\phi \leq \phi_p < \mathrm{max}_\phi)$ and $(\mathrm{min}_r \leq r_p < \mathrm{max}_r)$.
Figure~\ref{fig:graph-polar-quadtree} shows a section of a polar quadtree, where quadtree nodes are marked by dotted red lines.
When a leaf cell is full, it is split into four children. 
Splitting in the angular direction is straightforward as the angle range is halved: $\mathrm{mid}_\phi := \frac{\mathrm{max}_\phi+\mathrm{min}_\phi}{2}$.
For the radial direction, we choose the splitting radius to result in an equal division of probability mass:

\begin{equation}
 \text{mid}_{r\mathcal{H}} := \acosh\left(\frac{\cosh(\alpha\max_{r\mathcal{H}}) + \cosh(\alpha\min_{r\mathcal{H}})}{2}\right)/\alpha
 \label{eq:splitting-radius}
\end{equation}

(Note that Eq.~(\ref{eq:splitting-radius}) uses radial coordinates in the native representation, which are converted back to coordinates in the Poincaré disk.)
This leads to two lemmas useful for establishing the time complexity of the main quadtree operations:
\begin{lemma}
Let $\mathbb{D}_R$ be a hyperbolic disk of radius $R$, $\alpha \in \mathbb{R}$, $p$ a point in $\mathbb{D}_R$ which is distributed according to Section~\ref{sub:hyperbolic-introduction}, and $T$ be a polar quadtree on $\mathbb{D}_R$.
Let $C$ be a quadtree cell at depth $i$. Then, the probability that $p$ is in $C$ is $4^{-i}$.
\label{thm:node-cell-probabilities}
\end{lemma}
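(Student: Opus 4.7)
The plan is to prove the statement by induction on the depth $i$, exploiting the fact that the angular coordinate $\phi$ and the radial coordinate $r$ are drawn independently, so the probability that $p$ falls into a cell $(\min_\phi, \max_\phi, \min_r, \max_r)$ factors as a product of an angular and a radial probability.

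For the base case $i = 0$, the root cell is the whole disk $\mathbb{D}_R$ and trivially has probability $1 = 4^0$. For the inductive step, assume every depth-$i$ cell has probability $4^{-i}$, and consider a split into four children. The angular direction is cut at $\mathrm{mid}_\phi = (\max_\phi + \min_\phi)/2$; since $\phi$ is uniform on $[0,2\pi)$, each angular half carries exactly half of the parent's angular probability mass. The radial direction is cut at $\mathrm{mid}_r$ defined by Eq.~(\ref{eq:splitting-radius}), and I need to verify that this choice splits the radial probability mass in half. Working in the native representation, the density is $f(r) = \alpha\sinh(\alpha r)/(\cosh(\alpha R)-1)$, so the antiderivative is $F(r) = (\cosh(\alpha r)-1)/(\cosh(\alpha R)-1)$. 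The condition $F(\mathrm{mid}_r) - F(\min_r) = F(\max_r) - F(\mathrm{mid}_r)$ simplifies to $\cosh(\alpha\,\mathrm{mid}_r) = (\cosh(\alpha\max_r) + \cosh(\alpha\min_r))/2$, which upon solving for $\mathrm{mid}_r$ yields exactly Eq.~(\ref{eq:splitting-radius}). Hence each radial half also carries half of the parent's radial probability mass.

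Combining the two, each of the four children inherits a factor of $1/2 \cdot 1/2 = 1/4$ of its parent's probability, giving $4^{-(i+1)}$, which completes the induction. A small point to address explicitly is that the quadtree cells are described in terms of Poincaré coordinates while the splitting formula Eq.~(\ref{eq:splitting-radius}) is stated in native coordinates; since \texttt{mapToPoincare} is a bijective, monotone radial transformation, the split in Poincaré coordinates corresponds to exactly the same partition of radial probability mass as the split in native coordinates, so the argument carries over unchanged.

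The only mildly subtle step is the radial computation; the angular part and the inductive framing are routine. So the main obstacle is simply to carry out the CDF calculation carefully and to make the change-of-coordinates remark explicit, ensuring that Eq.~(\ref{eq:splitting-radius}) really does equipartition the radial mass regardless of which representation the cell boundaries are stored in.
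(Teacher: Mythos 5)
Your proposal is correct and follows essentially the same route as the paper: induction on depth, with the cell probability factoring into an angular part (halved by the uniform angular split) and a radial part (halved by the choice of $\mathrm{mid}_r$ in Eq.~(\ref{eq:splitting-radius}), which the paper verifies via an auxiliary integral lemma and you verify equivalently via the CDF $F(r) = (\cosh(\alpha r)-1)/(\cosh(\alpha R)-1)$). Your explicit remark about the native-versus-Poincar\'e coordinate issue matches the paper's parenthetical note after Eq.~(\ref{eq:splitting-radius}).
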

\begin{lemma}
 \label{thm:quadtree-height}
Let $R$ and $\mathbb{D}_R$ be as in Lemma~\ref{thm:node-cell-probabilities}. Let $T$ be a polar quadtree on $\mathbb{D}_R$ containing $n$ points distributed according to Section~\ref{sub:hyperbolic-introduction}.
 Then, for $n$ sufficiently large, $\mathrm{height}(T) \in O(\log n)$ whp.
\end{lemma}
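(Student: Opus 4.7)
The plan is to bound $\mathrm{height}(T)$ by locating a depth $h = \Theta(\log n)$ beyond which every cell contains at most one point with probability $\geq 1 - 1/n$. Since a leaf is split only when overfull, this forces the realized tree to have depth at most $h + O(1)$. In other words, I only need to control the deepest cell that still holds $\geq 2$ points in the abstract (fully expanded) subdivision.

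First, I invoke Lemma~\ref{thm:node-cell-probabilities}: any fixed cell $C$ at depth $i$ of the abstract subdivision (whether or not $C$ is actually instantiated in $T$) captures a given random point with probability $4^{-i}$. Hence any two specified points both land in $C$ with probability $4^{-2i}$, and a union bound over the $\binom{n}{2} \leq n^2/2$ pairs gives $\pro{C \text{ contains} \geq 2 \text{ points}} \leq \tfrac{1}{2} n^2 \cdot 4^{-2i}$. A second union bound over the $4^i$ depth-$i$ cells yields
\[
\pro{\text{some depth-}i\text{ cell holds} \geq 2 \text{ points}} \;\leq\; 4^i \cdot \tfrac{1}{2} n^2 \cdot 4^{-2i} \;=\; \frac{n^2}{2\cdot 4^i}.
\]
Choosing $i := \lceil \tfrac{3}{2} \log_2 n \rceil$ makes $4^i \geq n^3$, so the above probability is at most $1/(2n) \leq 1/n$. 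Consequently, with probability at least $1 - 1/n$, no cell at depth $i$ contains two or more points, no deeper split ever occurs, and $\mathrm{height}(T) \leq i + O(1) \in O(\log n)$.

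The main care-point I anticipate is the leaf bucket capacity: if a leaf accommodates up to $b$ points before it is subdivided, the threshold event must be rephrased as ``the cell holds $\geq b+1$ points,'' with failure probability bounded by $4^i \cdot \binom{n}{b+1} \cdot 4^{-(b+1)i}$. This stays polynomially small once $i$ is enlarged by a constant factor depending on $b$, so the $O(\log n)$ conclusion is unaffected. I would also briefly verify that Lemma~\ref{thm:node-cell-probabilities} genuinely applies to every cell of the abstract subdivision at every depth; this follows from the angular bisection rule combined with the probability-balanced radial split of Eq.~(\ref{eq:splitting-radius}), which ensures the $4^{-i}$ mass property holds uniformly for every candidate cell rather than only for the ones materialized during insertion.
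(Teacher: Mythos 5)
Your proof is correct, and it takes a genuinely different route from the paper. The paper stops the ``abstract'' subdivision at depth $k=\lceil\log_4 n\rceil$, where there are at least $n$ equiprobable cells, invokes a classical balls-into-bins bound to conclude that no such cell receives more than $3\ln n/\ln\ln n$ points whp, and then argues that any subtree grown below such a cell has height bounded by the number of points it contains. You instead descend to depth $i=\lceil\tfrac{3}{2}\log_2 n\rceil = 3\lceil\log_4 n\rceil + O(1)$ and use a birthday-paradox union bound over point pairs and cells to show that whp \emph{every} cell at that depth is already a singleton, so no split can occur there or below. Both arguments rest on the same key input (the uniform $4^{-i}$ cell probabilities from Lemma~\ref{thm:node-cell-probabilities}, which, as you correctly note, the probability-balanced radial split guarantees for every cell of the abstract subdivision, not only the materialized ones) and both give $O(\log n)$. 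What your version buys is self-containedness and robustness: it needs only independence and two union bounds, and it directly certifies termination of splitting at a fixed depth. In particular it sidesteps the paper's step ``the height of the subtree below a leaf cannot exceed the number of points in it,'' which for unit leaf capacity is not a purely combinatorial fact (two near-coincident points can force many splits) and really needs exactly the kind of collision-probability estimate you supply. What the paper's version buys is a finer structural statement -- only $O(\log n/\log\log n)$ levels below depth $\log_4 n$ are attributable to overloaded cells -- though this extra precision is not needed for the $O(\log n)$ conclusion. Your treatment of general leaf capacity $b$ via the $\binom{n}{b+1}4^{-(b+1)i}$ bound is also sound and only improves the constants.
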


\subsection{Time Complexity}
\label{subsec:complexity-analysis}
The time complexity of the generator is in turn determined by the operations of the polar quadtree.

\paragraph{Quadtree Insertion.}
For the amortized analysis, we consider each element's initial and final position during the insertion of $n$ elements.
Let $\h{T}$ be the final height of quadtree $T$, let $\h{i}$ be the final level of element $i$ and let $t(i)$ be the level of $i$ when it was inserted.
During insertion of element $i$, $t(i)$ quadtree nodes are visited until the correct leaf for insertion is found, the cost for this is linear in $t(i)$.
When a leaf cell is full, it splits into four children and each element in the leaf is moved down one level.
Over the course of inserting all $n$ elements, element $i$ thus moves $\h{i} - t(i)$ times due to leaf splits.
To reach its final position at level $\h{i}$, element $i$ accrues cost of $O(t(i) + \h{i} - t(i)) = O(\h{i}) \subseteq O(\h{T})$, which is $O(\log n)$ whp due to Lemma~\ref{thm:quadtree-height}.
The amortized time complexity for a node insertion is then: $T(\mathrm{Insertion}) \in O(\log n) \text{ whp.}$

\paragraph{Quadtree Range Query.}
\newcommand{\without}{\backslash}
Neighbors of a vertex $u$ are the vertices within a Euclidean circle constructed according to Proposition~\ref{proposition:poincare}.
Let $\mathcal{N}(u)$ be this neighborhood set in the final graph, thus $\mathrm{deg}(u) := |\mathcal{N}(u)|$.
We denote leaf cells that do not have non-leaf siblings as \emph{bottom leaf cells}.
A visualization can be found in Figure~\ref{fig:visualization-bottom-leaf-cells}.

\begin{lemma}
Let $T$ and $n$ be as in Lemma~\ref{thm:quadtree-height}.
A range query on $T$ returning a point set $A$ will examine at most $O(\sqrt{n} + |A|)$ bottom leaf cells whp.
\label{lemma:bound-neighbourless-leaf-cells}
\end{lemma}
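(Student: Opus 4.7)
The plan is to split the bottom leaves visited by the range query into \emph{interior cells}, those entirely inside $C_E$, and \emph{boundary cells}, those crossed by $\partial C_E$, and to bound these counts by $O(|A|)$ and $O(\sqrt{n})$ whp, respectively. Every visited bottom leaf belongs to exactly one class since it must intersect $C_E$ to be visited.

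For the interior count I will amortize against the output via the sibling structure. A bottom leaf has three sibling leaves by definition, and their common parent was split only because it exceeded the leaf capacity $c$, so the four-sibling group jointly holds at least $c+1$ points. A group in which all four siblings are interior contributes all $\ge c+1$ of its points to $A$, so the number of such all-interior groups is at most $|A|/(c+1) = O(|A|)$. A mixed group (crossed by $\partial C_E$) has at most three interior and at least one boundary member, so its interior members can be charged to the boundary count at factor at most three. Thus the interior count will be $O(|A|)$ plus $O(1)$ times the boundary count.

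For the boundary count I will combine a geometric crossing bound with a Chernoff estimate on occupancies. By Proposition~\ref{proposition:poincare} the curve $\partial C_E$ is a Euclidean circle. At depth $d$ the polar quadtree cell boundaries consist of $2^d$ rays through the origin and $2^d$ concentric Euclidean circles about it; since any Euclidean circle meets any ray or any other Euclidean circle in at most two points, $\partial C_E$ crosses the depth-$d$ cell boundaries at most $O(2^d)$ times and so passes through $O(2^d)$ cells at depth $d$. Taking the cutoff $d^{\ast} := \lceil \tfrac12 \log_2 n \rceil$ and summing the per-depth bound over $d \le d^{\ast}$ will yield $O(2^{d^{\ast}}) = O(\sqrt{n})$ shallow boundary cells. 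For $d^{\ast} < d \le h(T) \in O(\log n)$ whp by Lemma~\ref{thm:quadtree-height}, a cell has expected occupancy $O(4^{-(d-d^{\ast})})$, so a Chernoff bound gives only $O(\sqrt{n} \cdot 2^{-\Omega(k)})$ expected crossed-and-overfull cells at depth $d^{\ast}+k$; summing over $k$ gives $O(\sqrt{n})$ deep overfull parents whp, and at most four bottom leaves per parent yields $O(\sqrt{n})$ deep boundary cells whp.

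The main obstacle will be the deep-depth analysis: keeping the total $O(\sqrt{n})$ whp requires the Chernoff and union bounds to go through uniformly across all $O(\log n)$ remaining depths while respecting the per-depth geometric cap $O(2^d)$. The shallow-depth argument will be clean because a Euclidean circle meets any origin-ray and any concentric Euclidean circle in at most two points, so the main technical burden will lie in the whp control of deep overfull parents that sit on the query boundary.
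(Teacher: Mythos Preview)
Your proposal is correct and follows essentially the same approach as the paper: the same interior/boundary decomposition of visited bottom leaves, the same amortization of interior cells against $|A|$ via the sibling/split structure, and the same geometric crossing argument (a Euclidean circle meets each of the $O(2^{d^\ast})$ radial rays and concentric circles at most twice) to get the $O(\sqrt n)$ boundary bound at the critical depth $d^\ast = \lceil\log_4 n\rceil$. The only notable difference is in the treatment of boundary cells below depth $d^\ast$: the paper does not sum a per-depth Chernoff bound but instead applies a single binomial tail bound (its auxiliary Lemma~\ref{lemma:leaf-cell-descendants-bound}) to the \emph{total} number of points landing in the $O(\sqrt n)$ crossed level-$d^\ast$ cells, showing this total is at most $2c\sqrt n$ whp and hence that these cells have at most $O(\sqrt n)$ bottom-leaf descendants altogether---a somewhat cleaner route than your depth-by-depth summation, though both reach the same conclusion.
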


Due to Lemma~\ref{lemma:bound-neighbourless-leaf-cells}, the number of examined bottom leaf cells for a range query around $u$ is in $O(\sqrt{n} + \mathrm{deg}(u))$ whp.
The query algorithm traverses $T$ from the root downward. For each bottom leaf cell $b$, $O(\h{T})$ inner nodes and non-bottom leaf cells are examined on the path from the root to $b$.
Due to Lemma~\ref{thm:quadtree-height}, $\h{T}$ is in $O(\log n)$ whp.
The time complexity to gather the neighborhood of a vertex $u$ with degree $\text{deg}(u)$ is thus:
$T(\mathrm{RQ}(u)) \in O\left(\left(\sqrt{n}+\text{deg}(u)\right)\cdot \log n\right)  \text{ whp.}$

\paragraph{Graph Generation.}
To generate a graph $G$ from $n$ points, the $n$ positions need to be generated and inserted into the quadtree.
The time complexity of this is $n\cdot O(\log n) = O(n \log n)$ whp.
In the next step, neighbors for all points are extracted. This has a complexity of
\begin{equation}
 T(\mathrm{Edges}) = \sum_{v} O\left(\left(\sqrt{n}+\text{deg}(v)\right)\cdot \log n\right) = O\left(\left(n^{3/2}+m\right) \log n\right)  \text{ whp.}
 \label{eq:time-complexity-edges}
\end{equation}
This dominates the quadtree operations and thus total running time. We conclude:
\begin{theorem}
Generating random hyperbolic graphs can be done in $O((n^{3/2}+m) \log n)$ time whp
for sufficiently large $n$, \ie with probability $\geq 1 - 1 / n$.
\label{thm:time-complexity-graphgen}
\end{theorem}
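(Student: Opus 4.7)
The plan is to decompose the total running time into an insertion phase and a range-query phase, bound each separately using the two preceding lemmas, and then combine them via a union bound. The insertion phase follows from the amortized argument already sketched in the paragraph on \emph{Quadtree Insertion}: for element $i$, the initial descent to its insertion level $t(i)$ and the subsequent $\h{i}-t(i)$ downward moves caused by leaf splits together sum to $O(\h{i}) \subseteq O(\h{T})$. Summing over all $n$ insertions and invoking Lemma~\ref{thm:quadtree-height} gives an $O(n \log n)$ bound whp. The random sampling in lines~\ref{line:draw-angular}--\ref{line:draw-radial} and the Poincaré mapping in line~\ref{line:poincare-mapping} are $O(1)$ per vertex and are absorbed.

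For the range-query phase, I would process each vertex $v$ by first constructing $C_E$ in $O(1)$ via Proposition~\ref{proposition:poincare} and then querying the quadtree. Lemma~\ref{lemma:bound-neighbourless-leaf-cells} bounds the number of bottom leaf cells inspected by $O(\sqrt{n} + \mathrm{deg}(v))$ whp, and each of these is reached by a root-to-leaf path through $O(\h{T}) = O(\log n)$ inner and non-bottom leaf nodes. Hence the query at $v$ costs $O((\sqrt{n} + \mathrm{deg}(v))\log n)$, and summing over $v\in V$ using $\sum_v \mathrm{deg}(v) = 2m$ yields Equation~(\ref{eq:time-complexity-edges}), which dominates the insertion cost and gives the claimed bound.

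The one step that needs genuine care is the probability bookkeeping. Lemmas~\ref{thm:quadtree-height} and~\ref{lemma:bound-neighbourless-leaf-cells} each hold with probability at least $1 - n^{-c}$ for some constant $c$ that the proofs of those lemmas are free to choose; to obtain a single $1 - 1/n$ failure probability for the whole algorithm, I would take a union bound over the single height event and the $n$ range-query events. This costs one extra factor of $n$ in the failure probability and is absorbed by picking $c$ large enough in the two source lemmas. Beyond this I do not anticipate any conceptual obstacle: the remainder is the routine summation already displayed in Equation~(\ref{eq:time-complexity-edges}), and the theorem follows by assembling the two phase bounds.
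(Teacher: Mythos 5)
Your proposal is correct and follows essentially the same route as the paper: the same decomposition into an amortized $O(n\log n)$ insertion phase and a per-vertex range-query phase bounded via Lemma~\ref{lemma:bound-neighbourless-leaf-cells} and Lemma~\ref{thm:quadtree-height}, summed exactly as in Equation~(\ref{eq:time-complexity-edges}). Your extra attention to the union bound over the $n$ query events is a point the paper leaves implicit, and it does go through because the underlying tail bound in Lemma~\ref{lemma:leaf-cell-descendants-bound} is in fact exponentially small (of order $0.7^{\sqrt{n}}$), so the factor-$n$ loss is easily absorbed.
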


\section{Experimental Evaluation}
\label{sec:evaluation}
We first discuss several structural properties of networks and use them to analyze random hyperbolic graphs generated with different parameters.
Comparisons to real-world networks and existing generators follow, as well as a comparison of the running time to a previous implementation~\cite{aldecoa2015hyperbolic}.

\subsection{Network Properties}
\label{subsec:properties}
We consider several graph properties characteristic of complex networks.
The \emph{degree distribution} of many complex networks follows a \emph{power law}. 
The \emph{clustering coefficient} is the fraction of closed triangles to triads (paths of length $2$) and measures how likely two vertices with a common neighbor are to be connected.
\emph{Degree assortativity} describes whether vertices have neighbors of similar degree.
A value near 1 signifies subgraphs with equal degree, a value of -1 star-like structures.
Many real networks have multiple \emph{connected components}, yet one large component is usually dominant.
\emph{k-Cores} are a generalization of components and result from iteratively peeling away vertices of degree $k$ and assigning to each vertex the \emph{core number} of the innermost core it is contained in.
The \emph{diameter} is the longest shortest path in the graph, which is often surprisingly small in complex networks.
Complex networks also often exhibit a \emph{community structure}, \ie dense subgraphs with sparse connections between them.
The fit of a given vertex partition to this structure can be quantified by \emph{modularity}~\cite{newman2010networks}.

\begin{figure}[tb]
\begin{subfigure}[t]{.3\linewidth}
\includegraphics[width=\linewidth]{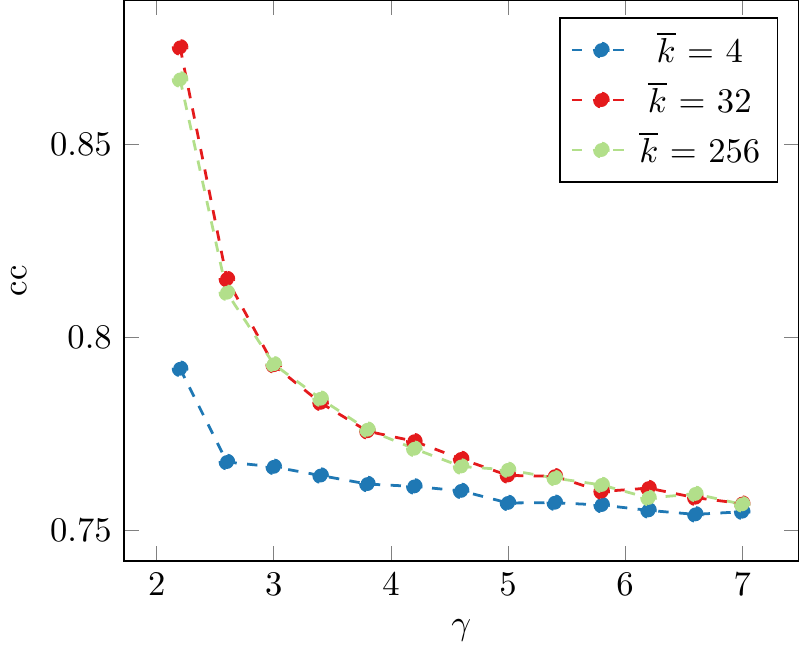}
\caption{Clustering coefficient}
\label{plot:clustercoeff}
\end{subfigure}
\quad
\begin{subfigure}[t]{.3\linewidth}
\includegraphics[width=\linewidth]{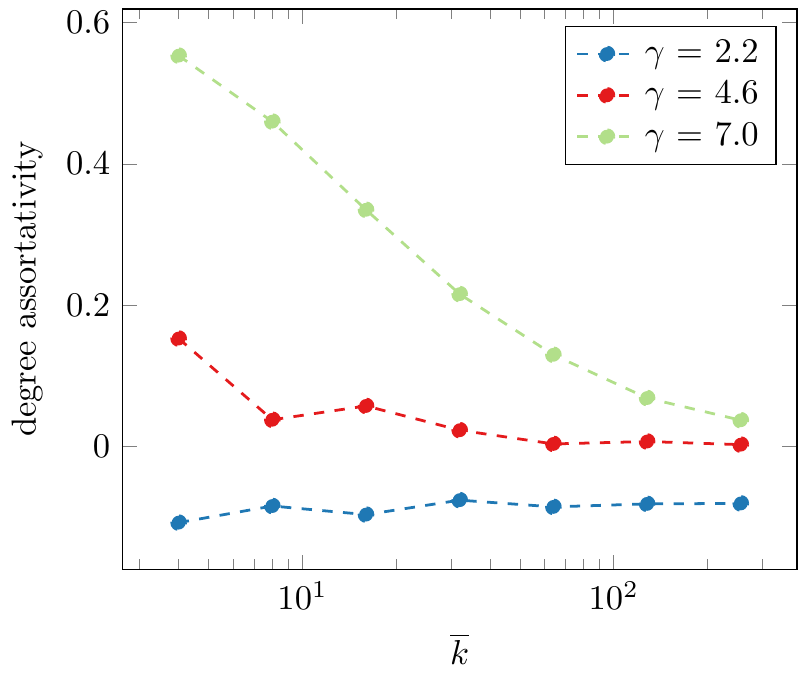}
\caption{Degree assortativity}
\label{plot:degass}
\end{subfigure}
\quad
\begin{subfigure}[t]{.3\linewidth}
\includegraphics[width=\linewidth]{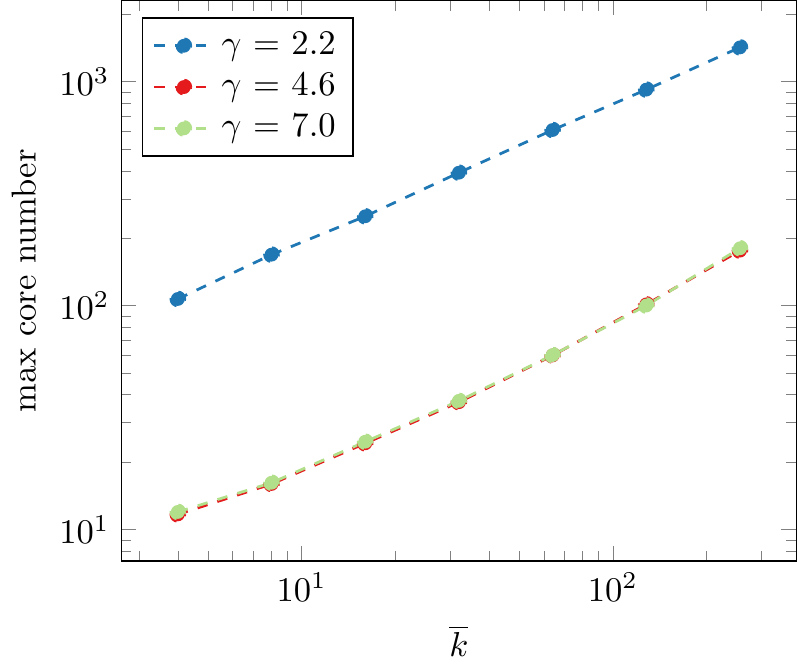}
\caption{Max core number}
\label{plot:degen}
\end{subfigure}
\caption{Properties of graphs generated with $n=10^6$, $\gamma \in [2.2,7]$ and $\overline{k} \in [4,256]$. Values are averaged over 10 runs.}
\label{plot:properties}
\end{figure}

The plots in Figures~\ref{plot:properties} and \ref{plot:additional-properties} (appendix) show the effect of average degree $\overline{k}$ and degree distribution exponent $\gamma$ on properties of the generated network.
We choose a range of $[4, 256]$ for $\overline{k}$ and $[2.2, 7]$ for $\gamma$, as these cover the ranges of real networks in Table~\ref{table:real-graphs}.
Since the geometric model inherently promotes the formation of triangles, the clustering coefficient (Fig.~\ref{plot:clustercoeff}) is between 0.6 and 0.9, significantly higher than in Erd\H{o}s-R\'{e}nyi graphs and thus more realistic for some applications.
The maximum core number corresponds closely to the density (Fig.~\ref{plot:degen}).
The graph is almost always connected for $\overline{k} \geq 32$ and almost always disconnected for $\overline{k} \leq 16$ (Fig.~\ref{plot:size-of-largest}).
The diameter of the largest component (Fig.~\ref{plot:diameter}) is highest for graphs with $\overline{k}=32$ and a high $\gamma$, as then the component is large and few high-degree hubs exist to keep the diameter small.
We use a modularity-driven algorithm~\cite{7006796} to analyze the community structure, 
the size of communities grows with the average degree (Fig.~\ref{plot:ncoms}).
Dense graphs with few communities have a relatively low modularity (Fig.~\ref{plot:modularity}).
Figure~\ref{plot:power-law-ll} shows the likelihood ratio of a power law fit to an exponential fit of the degree distribution, confirming the power-law property.
Finally, Figures~\ref{plot:properties-comparison-I} to \ref{plot:properties-comparison-III} compare graphs with $10^4$ vertices generated by our implementation and the implementation of \cite{aldecoa2015hyperbolic}.
The differences between the implementations are within the range of random fluctuations.

\subsection{Comparison with Real-world Networks}
\label{subsec:comparison-real-graphs}
We judge the realism of generated graphs by comparing them to a diverse set of
real complex networks (Table~\ref{table:real-graphs}):
\texttt{PGPgiantcompo} describes the largest connected component in the PGP web of trust, \texttt{caidaRouterLevel} and \texttt{as-Skitter} represent internet topology,
while \texttt{citationCiteseer} and \texttt{coPapersDBLP} are scientific collaboration networks,
\texttt{soc-LiveJournal} and \texttt{fb-Texas84} are social networks and \texttt{uk-2002} and \texttt{wiki\_link\_en} result from web hyperlinks.
Random hyperbolic graphs with matching $\overline{k}$ and $\gamma$ share some, but not all properties with the real networks. 
The clustering coefficient tends to be too high, with the exception of \texttt{coPapersDBLP}.
The diameter is right when matching the facebook graph, but higher by a factor of $\approx 100$ otherwise, since the geometric approach produces fewer long-range edges.
Adding 0.5\% of edges randomly to a network with average degree 10 reduces the diameter by about half while keeping the other properties comparable or unchanged,
as seen in Figures~\ref{plot:long-range-diameter} and \ref{plot:long-range-effects} in the appendix.
Just as in the real networks, the degree assortativity of generated graphs varies from slightly negative to strongly positive.
Generated dense subgraphs tend to be a tenth as large as communities typically found in real networks of the same density, and are not independent of total graph size.
Similar to real networks, random hyperbolic graphs mostly admit high-modularity partitions.
Since the most unrealistic property -- the diameter -- can be corrected with the addition of random edges, we consider random hyperbolic graphs to be reasonably realistic.

\subsection{Comparison with Existing Generators}
\label{subsec:comparison-existing-generators}

In our comparison with some existing generators, we consider realism, flexibility and performance.
Typical properties of networks generated by different generators can be found in Table~\ref{table:other-generators}.
The \emph{Barabasi-Albert model}~\cite{albert2002statistical} implements a preferential attachment process to model the growth of real complex networks. 
The probability that a new vertex will be attached to an existing vertex $v$ is proportional to $v$'s degree, which results in a power-law degree distribution.
The distribution's exponent is fixed at 3, which is roughly in the range of real-world networks. 
However, the degree assortativity is negative and the clustering coefficient low. The running time is in $\Theta(n^2)$, rendering the creation of massive networks infeasible.
The Dorogovtsev-Mendes model is designed to model network growth with a fixed average degree. It is very fast in theory ($\Theta(n)$) and practice, but at the expense of flexibility.
Clustering coefficient, degree assortativity and power law exponent of generated graphs are roughly similar to those of real-world networks.
The \emph{Recursive Matrix (R-MAT)} model~\cite{chakrabarti2004r} was proposed to recreate properties of complex networks including a power-law degree distribution, the small-world property and self-similarity.
The R-MAT generator recursively subdivides the initially empty adjacency matrix into quadrants and drops edges into it according to given probabilities.
It has $\Theta(m \log n)$ asymptotic complexity and is fast in practice.
At least the Graph500 benchmark parameters\cite{graph500}
lead to an insignificant community structure
and clustering coefficients, as no incentive to close triangles exists.
Given a degree sequence $\mathit{seq}$, the \emph{Chung-Lu (CL) model}~\cite{aiello2000random} adds edges $(u,v)$ with a probability of $p(u,v) = \frac{\mathit{seq}(u) \mathit{seq}(v)}{\sum_k \mathit{seq}(k)}$,
recreating $\mathit{seq}$ in expectation. The model can be conceived as a weighted version of the well-known Erd\H{o}s-R\'{e}nyi (ER) model and has 
similar capabilities as the R-MAT model~\cite{doi:10.1137/1.9781611972825.92}.
Implementations exist with $\Theta(n+m)$ time complexity~\cite{miller2011efficient}.
It succeeds in matching the degree distributions of the first four graphs in Table~\ref{table:real-graphs}, but in all results both clustering and degree assortativity are near zero and the diameter too small.
\begin{savenotes}
\begin{table}[tb]
\caption{Measured properties of some generative models. Parameter ranges are $n=10^6, k \in [2^2, 2^8], \gamma \in[2.2,7]$ for random hyperbolic graphs, $0.5\%$ additional long-range edges for random hyperbolic graphs with long-range edges, $\mathrm{n}=10^5,n0\in[0,10^5),k\in[0,10^4)$ for the Barabasi-Albert generator,
PGPgiantcompo, caidaRouterLevel, citationCiteseer and coPapersDBLP for Chung-Lu and \bter\ and $\mathrm{scale}=16,\mathrm{eF}=10,a\in[0.4,1.0),b\in[0,a),c\in[0,a) $ for R-MAT.}
\label{table:other-generators}
\small
\smallskip
\centering
\begin{tabular}{l|c|c|c|c|c|c|c}
 Name &		param&	m&				cc&	deg.ass.&	power law&	$\gamma$&	diam.\\
 \hline
\hline 
 \scriptsize{RHG}&	$\overline{k}, \gamma$	& $(\overline{k}/2)\cdot n$ &0.75-0.9&-0.05-0.7& 	yes&		$\approx \gamma$& 3-16k\\ 
 \scriptsize{RHG-LR}&	$\overline{k}, \gamma$	& $(\overline{k}/2)\cdot n$ &0.75-0.9&-0.05-0.4& 	yes&		$\approx \gamma$& 3-30\\ 
\hline
\scriptsize{Barabasi-Albert} &	k, n0&	$k\cdot n$&		0-0.68&			$<0$&		if $n0 < 0.3n$&$\approx	 3$&	$< 30$\\
\scriptsize{\bter} &		dd,ccd	&$\approx \sum \mathrm{dd}/2$&			matched&$\approx$matched&possible&	$\approx$matched & varies \\
\scriptsize{Chung-Lu} &	seq&	$\approx \sum \mathrm{seq}/2$&	$<10^{-2}$&		$<10^{-2}$& possible& varies&8-12\\
\scriptsize{Dor.-Men.} &		& $2\cdot n$& 		0.7&	0.02-0.05&	yes&		5-6&15-40\\
\scriptsize{R-MAT} &	$a,b,c$,eF&	$\mathrm{eF}\cdot n$&	0-1&		0-0.6&		yes&	0-10&0-18\\
\end{tabular}
\end{table}
\end{savenotes}

\emph{\bter}~\cite{KoPiPlSe14} is a two-stage structure-driven model.
It uses the standard ER model to form relatively dense subgraphs, thus forming distinct communities.
Afterwards, the CL model is used to add edges, matching the desired degree distribution in expectation~\cite{seshadhri2012community}.
This is done in  $\Theta(n+m\log d_{\mathrm{max}})$, where
$d_{\mathrm{max}}$ is the maximum vertex degree.
We test \bter with the \texttt{PGPgiantcompo}, \texttt{caidaRouterLevel}, \texttt{citationCiteseer} and 
\texttt{coPapersDBLP} networks.
The degree distributions and clustering coefficients are matched with a deviation of $\approx 5\%$.
Generated communities have a size of 5-45 on average, which is smaller than typical real communities and those of random hyperbolic graphs.

As indicated by Table~\ref{table:other-generators}, random hyperbolic graphs can match a degree distribution exponent, have stronger clustering than the Chung-Lu and R-MAT generator and are more scalable and flexible than the Barabasi-Albert generator.
Diameter (without additional random edges) and number of connected components are less realistic than those produced by \bter, but community structure is closer to typical real communities.

\subsection{Performance Measurements}
\label{subsec:performance}
\definecolor{markedcolor}{RGB}{31,120,180}
\definecolor{plottinggreen}{RGB}{178,223,138}
\definecolor{thirdhue}{RGB}{228,26,28}
\newcommand{\aconst}{3.8}
\newcommand{\bconst}{1.14}
\newcommand{\cconst}{1.38}
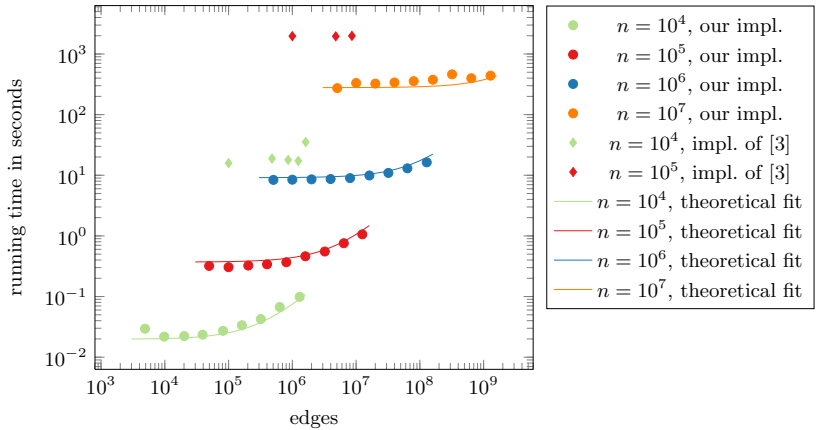
\begin{figure}[tb]
\centering
\begin{tikzpicture}[scale=0.85]
 \begin{axis}[xmode=log,ymode=log,xlabel=edges,ylabel=running time in seconds,legend entries={}, legend pos=outer north east]
  \addplot[scatter,only marks,
	   point meta = explicit symbolic,
	   scatter/classes={
	    a={draw=plottinggreen,fill=plottinggreen },%
	    b={draw=thirdhue,fill=thirdhue },%
	    c={draw=markedcolor,fill=markedcolor},
	    d={draw=orange,fill=orange},
	    g={mark=diamond*, draw=plottinggreen, fill=plottinggreen},
	    h={mark=diamond*, draw, thirdhue, fill=thirdhue}}
	    ]
	    table[meta=label]
          {plots/plotting-food-phipute-gbuild-no-leftsup-no-tbb.dat};
  \addlegendentry{$n=10^4$, our impl.};
  \addlegendentry{$n=10^5$, our impl.};
  \addlegendentry{$n=10^6$, our impl.};
  \addlegendentry{$n=10^7$, our impl.};
  \addlegendentry{$n=10^4$, impl. of \cite{aldecoa2015hyperbolic}};
  \addlegendentry{$n=10^5$, impl. of \cite{aldecoa2015hyperbolic}};
  \addplot[plottinggreen] expression[domain=3000:1600000] {\aconst *10^(-3)*4 + (\bconst *10^(-3)+\cconst *10^(-8)*x)*4};\addlegendentry{$n=10^4$, theoretical fit};
  \addplot[thirdhue] expression[domain=30000:16000000] {\aconst *10^(-2)*5 + (\bconst *10^(-4)*sqrt(10^5)+\cconst *10^(-8)*x)*5};\addlegendentry{$n=10^5$, theoretical fit};
  \addplot[markedcolor] expression[domain=300000:160000000] {\aconst *10^(-1)*6 + (\bconst *10^(-3)*sqrt(10^6)+\cconst *10^(-8)*x)*6};\addlegendentry{$n=10^6$, theoretical fit};
  \addplot[orange] expression[domain=3000000:1600000000] {\aconst *10^(0)*7 + (\bconst *10^(-2)*sqrt(10^7)+\cconst *10^(-8)*x)*7};\addlegendentry{$n=10^7$, theoretical fit};
 \end{axis}
\end{tikzpicture}
 \caption{Comparison of running times to generate networks with $10^4$-$10^7$ vertices. Circles represent running times of our implementation, diamonds the running times of the implementation of \cite{aldecoa2015hyperbolic}.
 Our running times are fitted with the equation $T(n,m) = \left(\left(\aconst\cdot10^{-7} n + \bconst\cdot10^{-9} n^{3/2} + \cconst\cdot10^{-8}m\right) \log n\right)$ seconds.
 }
 \label{plot:time-scatter}
\end{figure}
Figure~\ref{plot:time-scatter} shows the parallel running times for networks with $10^4$-$10^7$ vertices and up to $1.2\cdot 10^9$ edges.
Measurements were made on a server with 256 GB RAM and 2x8 Intel Xeon E5-2680 cores at 2.7 GHz.
We achieve a throughput of up to 13 million edges/s. Even at only $10^4$ vertices, our implementation is
two orders of magnitude faster than the implementation of~\cite{aldecoa2015hyperbolic} for the same graphs. (Note that
their implementation supports a more general model.)
Due to the smaller asymptotic complexity of $O((n^{3/2} + m)\log n)$, this gap grows with increasing graph sizes.
This complexity we prove in Section~\ref{sec:fast-generation}
is supported by the measurements, as illustrated by the lines for the theoretical fit.

\section{Conclusions}
\label{sec:conclusion}
In this work we have provided the first generator of random hyperbolic graphs with subquadratic running time.
Our parallel generator scales to large graphs that have many properties also found in real-world complex networks. The main algorithmic improvement stems from a polar quadtree, which we have adapted to hyperbolic space and which can thus be of independent interest.

The incremental quadtree construction admits a dynamic model with vertex movement, which deserves
a more thorough treatment than would have been possible here given the space constraints. It is thus part
of future work.

\begin{small}
\paragraph*{Acknowledgements.}
We thank F. Meyer auf der Heide for helpful discussions.
\end{small}

\bibliographystyle{plain}
\bibliography{Bibliography}

\clearpage
\appendix

\section*{Appendix}
\section{Derivation of Proposition~\ref{proposition:poincare}}
\label{sub:poincare-transformation}
\mvl{TODO: improve notation of $r_e$}
When given a hyperbolic circle with center $(\phi_h, r_h)$ and radius $\text{rad}_h$ as in Section~\ref{sub:algorithm},
the radial coordinates $r_e$ of points on the corresponding Euclidean circle can be derived with several transformations from the definition of the hyperbolic distance:
\begin{align}
\text{rad}_h &= \acosh\left(1+\frac{2(r_e-r_h)^2}{(1-r_h^2)(1-r_e^2)}\right)\\
 \Leftrightarrow \text{cosh}(\text{rad}_h)-1 &= \frac{2(r_e-r_h)^2}{(1-r_h^2)(1-r_e^2)}\\
 \Leftrightarrow (\text{cosh}(\text{rad}_h)-1)(1-r_e^2) &= \frac{2(r_e^2-2 r_h r_e + r_h^2)}{1-r_h^2}
\end{align}

To keep the notation short, we define $a := \mathrm{cosh}(\text{rad}_h)-1$ and $b := (1-r_h^2)$.
Since $\text{rad}_h > 0$ and $r_h \in [0,1)$, both $a$ and $b$ are greater than 0.
It follows:
\newcommand{\denom}{(a + \frac{2}{b})}
\begin{align}
(\text{cosh}(\text{rad}_h)-1)(1-r_e^2) &= \frac{2(r_e^2-2 r_h r_e + r_h^2)}{1-r_h^2} \\
\Leftrightarrow a - a \cdot r_e^2 &= \frac{2(r_e^2-2 r_h r_e + r_h^2)}{b} \\
 \Leftrightarrow a &= r_e^2 \cdot a + \frac{2(r_e^2-2 r_h r_e + r_h^2)}{b}\\
 \Leftrightarrow a &= r_e^2(a + \frac{2}{b}) + r_e \frac{-4 r_h}{b} + \frac{2 r_h^2}{b}\\
 \Leftrightarrow 0 &=r_e^2(a + \frac{2}{b}) + r_e \frac{-4 r_h}{b} + \frac{2 r_h^2}{b} - a\\
 \Leftrightarrow 0 &=r_e^2 + r_e \frac{-4 r_h}{b\denom} + \frac{2 r_h^2}{b\denom} - \frac{a}{\denom}
\end{align}
Solving this quadratic equation, we obtain:
\begin{equation}
 r_{e_{1,2}} = \frac{2 r_h}{ab+2}\pm\sqrt{\left(\frac{2 r_h}{ab+2}\right)^2 - \frac{2 r_h^2 - ab}{ab+2}}
\end{equation}

Since ($\phi_h, r_{e_1}$) and ($\phi_h, r_{e_2}$) are different points on the border of $E$, the center $E_c$ needs to be on the perpendicular bisector.
Its radial coordinate $r_{E_{c}}$ is thus $(r_{e_{1}}+r_{e_{2}})/2 = \frac{2 r_h}{ab+2}$.
To determine the angular coordinate, we need the following lemma:

\begin{lemma}
 Let $H$ be a hyperbolic circle with center at $(\phi_h, r_h)$ and radius $\mathrm{rad}_h$. The center $E_c$ of the corresponding Euclidean circle $E$ is on the ray from $(\phi_h, r_h)$ to the origin.
 \label{lemma:poincare-ray}
\end{lemma}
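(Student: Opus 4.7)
}
My plan is to exploit the symmetries of the Poincar\'e disk model, specifically the fact that it is conformal and that Euclidean reflections across diameters of $D_1(0)$ are simultaneously hyperbolic isometries. The key observation is that the hyperbolic circle $H$ around $u=(\phi_h,r_h)$ has a natural mirror symmetry along the geodesic passing through the origin and $u$, and this symmetry is realized by a map that is \emph{also} a Euclidean isometry of the Poincar\'e disk.

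\medskip

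First I would set up the axis. Let $\ell$ denote the Euclidean straight line through the origin making angle $\phi_h$ with the positive $x$-axis; note that $\ell$ is a diameter of $D_1(0)$ and contains the point $u$. In the Poincar\'e disk model, diameters of $D_1(0)$ are hyperbolic geodesics, so reflection $\sigma$ across $\ell$ is a hyperbolic isometry (it fixes $\ell$ pointwise, preserves the Poincar\'e metric, and is an orientation-reversing involution).

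\medskip

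Next I would use this symmetry on both sides of the correspondence. Since $\sigma$ fixes $u$ and is a hyperbolic isometry, it maps the hyperbolic ball of radius $\mathrm{rad}_h$ around $u$ to itself, hence $\sigma(H)=H$. Because the Poincar\'e model identifies points of the hyperbolic plane with points of $D_1(0)$, the same map $\sigma$ (now regarded as a Euclidean reflection) sends the Euclidean point set representing $H$, namely $E$, to itself: $\sigma(E)=E$. But a Euclidean circle invariant under a Euclidean reflection must have its Euclidean center lying on the axis of that reflection, so $E_c\in\ell$.

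\medskip

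Finally I would pin down which half of $\ell$ contains $E_c$. The derivation preceding the lemma already produced the two radial coordinates $r_{e_1},r_{e_2}$ along the ray at angle $\phi_h$ and showed that $r_{E_c}=(r_{e_1}+r_{e_2})/2 = \tfrac{2r_h}{ab+2}$, which is strictly positive since $a,b>0$ and $r_h\in(0,1)$. Combined with $E_c\in\ell$, this places $E_c$ on the half of $\ell$ containing $u$, i.e.\ on the ray from $u$ to the origin, as claimed.

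\medskip

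I do not foresee a serious obstacle here; the statement is essentially a consequence of the conformal/isometric compatibility of the Poincar\'e disk model. The only point that deserves a careful sentence in the write-up is the justification that reflection across a diameter is simultaneously a Euclidean isometry of $D_1(0)$ and a hyperbolic isometry of the model, since this dual role is exactly what allows us to transport the symmetry of $H$ to the symmetry of $E$.
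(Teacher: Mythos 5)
Your argument is correct and essentially identical to the paper's: both exploit that reflection across the diameter through the origin and $(\phi_h,r_h)$ is simultaneously a hyperbolic isometry fixing the circle's hyperbolic center (so it preserves $H$, hence $E$) and a Euclidean reflection (so the invariant Euclidean circle's center lies on the axis). Your extra step locating $E_c$ on the correct half of the line via $r_{E_c}=\tfrac{2r_h}{ab+2}>0$ is a small refinement the paper leaves implicit, but not a different route.
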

\begin{proof}
Let $p$ be a point in $H$, meaning $\mathrm{dist}_{\mathcal{H}}(p,(\phi_h, r_h)) \leq \mathrm{rad}_h$.
Let $p'$ be the mirror image of $p$ under reflection on the ray going through $(\phi_h, r_h)$ and $p$.
The point $(\phi_h, r_h)$ is on the ray and unchanged under reflection: $(\phi_h, r_h) = (\phi_h, r_h)'$.
Since reflection on the equator is an isometry in the Poincaré disk model and preserves distance, we have $\mathrm{dist}_{\mathcal{H}}(p',(\phi_h, r_h))$
= $\mathrm{dist}_{\mathcal{H}}(p,(\phi_h, r_h)') = \mathrm{dist}_{\mathcal{H}}(p,(\phi_h, r_h)) \leq \mathrm{rad}_h$ and $p'\in H$.
The Euclidean circle $E$ is then symmetric with respect to the ray and its center $E_c$ must lie on it.
\qed
\end{proof}

The radius of the circle is then derived from the distance of the center to $(\phi_h, r_{e_1})$ and 
$(\phi_h, r_{e_2})$, which is $\sqrt{\left(\frac{2 r_h}{ab+2}\right)^2 - \frac{2 r_h^2 - ab}{ab+2}}$.
With both radial and angular coordinates of $E_c$ fixed, Proposition~\ref{proposition:poincare} follows.

\section{Methods Used in Algorithm \ref{algo:generation}}
\label{sec:functions}
\subsection{getTargetRadius}
For given values of $n, \alpha$ and $R$, the expected average degree $\overline{k}$ is given by~\cite[Eq.~(22)]{Krioukov2010} and the notation $\xi=(\alpha/\zeta)/(\alpha/\zeta-1/2)$:
\begin{equation}
 \overline{k} = \frac{2}{\pi} \xi^2 n \left(e^{-\zeta R/2} + e^{-\alpha R} \left(\alpha \frac{R}{2} \left(\frac{\pi}{4} \left(\frac{\zeta}{\alpha}^2\right) -(\pi-1) \frac{\xi}{\alpha} + (\pi-2) \right) -1 \right) \right)
\end{equation}
As mentioned in Section~\ref{sec:related-work}, the value of $\zeta$ can be fixed while retaining all degrees of freedom in the model and we thus assume $\zeta=1$.
We then use binary search with fixed $n, \alpha$ and desired $\overline{k}$ to find an $R$ that gives us a close approximation of the desired average degree $\overline{k}$.

\subsection{mapToPoincare}
In the native representation\cite{Krioukov2010}, the radial coordinate $r_\mathcal{H}$ of a point $p_\mathcal{H} = (\phi_\mathcal{H}, r_\mathcal{H})$ is set to the hyperbolic distance to the origin:
\[
 r_\mathcal{H} = \mathrm{dist}_{\mathcal{\mathcal{H}}}(p_\mathcal{H},(0,0))
\]

A mapping $g: \mathcal{H}^2 \rightarrow D_1(0)$ from the native representation to the Poincaré disc model needs to preserve the hyperbolic distance to the origin across models.
By using the definition of the Poincaré metric, we can derive its radial coordinate $r_e$ in the Poincaré disc model:
\begin{equation}
g((\phi_\mathcal{H}, r_\mathcal{H})) = \left(\phi_\mathcal{H}, \sqrt{\frac{\cosh(r_\mathcal{H})-1}{\cosh(r_\mathcal{H})+1}}\right)
\end{equation}

This mapping then gives the correct hyperbolic distance with the Poincaré metric (Eq.~(\ref{eq:poincare-metric}):
\begin{align}
\mathrm{dist}_{\mathcal{H}}(g((\phi_\mathcal{H}, r_\mathcal{H})), (0,0)) &= \acosh\left(1+2\frac{||g((\phi_\mathcal{H}, r_\mathcal{H}))-(0,0)||^2}{(1-||g((\phi_\mathcal{H}, r_\mathcal{H}))||^2)(1-||(0,0)||^2)}\right)\\
&= \acosh\left(1+2\frac{||g((\phi_\mathcal{H}, r_\mathcal{H}))||^2}{(1-||g((\phi_\mathcal{H}, r_\mathcal{H}))||^2)(1)}\right)\\
&= \acosh\left(1+2\frac{||\left(\phi_\mathcal{H}, \sqrt{\frac{\cosh(r_\mathcal{H})-1}{\cosh(r_\mathcal{H})+1}}\right)||^2}{(1-||\left(\phi_\mathcal{H}, \sqrt{\frac{\cosh(r_\mathcal{H})-1}{\cosh(r_\mathcal{H})+1}}\right)||^2)}\right)\\
&= \acosh\left(1+2\frac{\left(\sqrt{\frac{\cosh(r_\mathcal{H})-1}{\cosh(r_\mathcal{H})+1}}\right)^2}{(1-\left(\sqrt{\frac{\cosh(r_\mathcal{H})-1}{\cosh(r_\mathcal{H})+1}}\right)^2)}\right)\\
&= \acosh\left(1+2\frac{\left(\frac{\cosh(r_\mathcal{H})-1}{\cosh(r_\mathcal{H})+1}\right)}{1-\left(\frac{\cosh(r_\mathcal{H})-1}{\cosh(r_\mathcal{H})+1}\right)}\right)\\
&= \acosh\left(1+2\frac{\left(\cosh(r_\mathcal{H})-1\right)}{\cosh(r_\mathcal{H})+1-\left(\cosh(r_\mathcal{H})-1\right)}\right)\\
&= \acosh\left(1+2\frac{\left(\cosh(r_\mathcal{H})-1\right)}{2}\right)\\
&= \acosh\left((\cosh(r_\mathcal{H})\right) = r_\mathcal{H}\\
\end{align}

\subsection{transformCircleToEuclidean}
The circle is constructed according to Proposition~\ref{proposition:poincare}.

\section{Proof of Lemma \ref{thm:node-cell-probabilities}}

To prove Lemma~\ref{thm:node-cell-probabilities}, we use Eq.~(\ref{eq:base-radial-distribution}) and an auxiliary lemma.

\begin{lemma}
Let $p$ be a point in $\mathbb{D}_R$ and $C$ a quadtree cell delimited by $\textnormal{min}_r$, $\textnormal{max}_r$, $\textnormal{min}_\phi$ and $\textnormal{max}_\phi$. 
The probability of $p$ being in $C$ is given by the following equation:
 \begin{equation}
 \Pr(p \in C) = \frac{\max_{\phi} - \min_{\phi}}{2\pi} \cdot \frac{\cosh(\alpha \max_{r}) - \cosh(\alpha \min_{r})}{\cosh (\alpha R) - 1} 
 \end{equation}
 \label{lemma:quadtree-cell-probability}
\end{lemma}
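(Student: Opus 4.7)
The plan is to exploit the independence of the angular coordinate $\phi$ and the radial coordinate $r$ in the generating process (see lines \ref{line:draw-angular}--\ref{line:draw-radial} of Algorithm \ref{algo:generation}), so that $\Pr(p \in C)$ factors as $\Pr(\phi \in [\min_\phi, \max_\phi)) \cdot \Pr(r \in [\min_r, \max_r))$. Each of the two marginal probabilities can then be computed from the known distributions and shown to equal the corresponding factor in the claimed formula.

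The angular factor is immediate: $\phi$ is drawn uniformly from $[0, 2\pi)$, so $\Pr(\phi \in [\min_\phi, \max_\phi)) = (\max_\phi - \min_\phi)/(2\pi)$. For the radial factor, I would work in the native hyperbolic representation (in which the splitting radii in Eq.~(\ref{eq:splitting-radius}) are specified), and integrate the density given by Eq.~(\ref{eq:base-radial-distribution}):
\begin{equation}
\Pr(r \in [\min_r, \max_r)) = \int_{\min_r}^{\max_r} \frac{\alpha \sinh(\alpha r)}{\cosh(\alpha R) - 1}\, dr.
\end{equation}
Since $\frac{d}{dr} \cosh(\alpha r) = \alpha \sinh(\alpha r)$, the antiderivative is $\cosh(\alpha r)/(\cosh(\alpha R)-1)$, and evaluating at the two endpoints yields $\bigl(\cosh(\alpha \max_r) - \cosh(\alpha \min_r)\bigr)/\bigl(\cosh(\alpha R) - 1\bigr)$, exactly the radial factor in the lemma. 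Multiplying the two marginals gives the claimed formula.

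There is no real obstacle here; the proof is a direct computation once independence is invoked. The only point worth a brief remark is that although quadtree cells are stored using their Poincaré-disk radial bounds (as defined in Section~\ref{sub:data-structure}), the mapping between the two representations is strictly monotone in $r$, and the splitting rule of Eq.~(\ref{eq:splitting-radius}) is defined in the native representation. One may therefore identify the interval $[\min_r, \max_r)$ with its pullback to native hyperbolic coordinates and apply the density formula~(\ref{eq:base-radial-distribution}) unchanged, which justifies the appearance of $\cosh(\alpha \min_r)$ and $\cosh(\alpha \max_r)$ in the final expression.
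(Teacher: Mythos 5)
Your proposal is correct and matches the paper's own argument: the paper likewise writes the joint density as $\frac{1}{2\pi}\cdot f(r)$ using the independence of $\phi$ and $r$, integrates out $\phi$ to obtain the factor $(\max_\phi-\min_\phi)/(2\pi)$, and integrates $\alpha\sinh(\alpha r)$ to get $\bigl(\cosh(\alpha\max_r)-\cosh(\alpha\min_r)\bigr)/\bigl(\cosh(\alpha R)-1\bigr)$. The only cosmetic difference is that the paper presents this as a single double integral rather than a product of two marginals, which is the same computation.
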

\begin{proof}
Let $g((\phi, r))$ be the probability density for a point $p$ at $(\phi, r)$.
In Section~\ref{sub:hyperbolic-introduction}, we used a uniform distribution over $[0,2\pi)$ for the angles and defined $f(r)$ as the probability density for radial coordinate $r$.
Since the two parameters are independent, we write:
\begin{align}
 g : [0,2\pi) \times [0, R) \rightarrow [0,1]\\
 g((\phi, r)) = \frac{1}{2\pi} \cdot f(r) = \frac{1}{2\pi} \cdot \alpha\frac{\sinh(\alpha r)}{\cosh (\alpha R) - 1}
\end{align}

With $C$ delimited by $\text{min}_r$, $\text{max}_r$, $\text{min}_\phi$ and $\text{max}_\phi$ and $g$ being the product of two independent functions, we write:
\[
\Pr(p \in C) = \int_{\min_{r}}^{\max_{r}} \int_{\min_{\phi}}^{\max_{\phi}} \frac{1}{2\pi} \cdot \alpha\frac{\sinh(\alpha r)}{\cosh (\alpha R) - 1} d\phi dr
\]
Constant factors can be moved out of the integral:
\[
\Pr(p \in C) = \frac{1}{2\pi} \cdot \frac{1}{\cosh (\alpha R) - 1} \cdot \int_{\min_{r}}^{\max_{r}} \int_{\min_{\phi}}^{\max_{\phi}} \alpha\sinh(\alpha r) d\phi dr
\]
The integrand is independent of $\phi$:
\[
\Pr(p \in C) = \frac{\max_{\phi} - \min_{\phi}}{2\pi} \cdot \frac{1}{\cosh (\alpha R) - 1} \cdot \int_{\min_{r}}^{\max_{r}} \alpha\sinh(\alpha r) dr
\]
Finally, we get:
\[
\Pr(p \in C) = \frac{\max_{\phi} - \min_{\phi}}{2\pi} \cdot \frac{\cosh(\alpha \max_{r}) - \cosh(\alpha \min_{r})}{\cosh (\alpha R) - 1} 
\]
\qed
\end{proof}
 
We proceed by proving Lemma~\ref{thm:node-cell-probabilities} by induction.
\begin{proof}
Start of induction ($i$ = 0):
At level 0, only the root cell exists and covers the whole disk.
Since $C = \mathbb{D}_R$, $\Pr(p \in C) = 1 = 4^{-0}$.

Inductive step ($i \rightarrow i+1$):
Let $C_i$ be a node at level $i$.
$C_i$ is delimited by the radial boundaries $\mathrm{min}_r$ and $\mathrm{max}_r$, as well as the angular boundaries $\mathrm{min}_\phi$ and $\mathrm{max}_\phi$.
It has four children at level $i+1$, separated by $\mathrm{mid}_r$ and $\mathrm{mid}_\phi$. Let $SW$ be the south west child of $C_i$.
With Lemma~\ref{lemma:quadtree-cell-probability}, the probability of $p\in SW$ is:
\begin{equation}
\Pr(p \in SW) = \frac{\mathrm{mid}_{\phi} - \min_{\phi}}{2\pi} \cdot \frac{\cosh(\alpha \mathrm{mid}_r) - \cosh(\alpha \min_{r})}{\cosh (\alpha R) - 1} 
 \end{equation}
The angular range is halved ($\mathrm{mid}_\phi := \frac{\mathrm{max}_\phi+\mathrm{min}_\phi}{2}$) and $\mathrm{mid}_r$ is selected according to Eq.~(\ref{eq:splitting-radius}):
\begin{equation*}
 \text{mid}_r := \acosh\left(\frac{\cosh(\alpha\max_r) + \cosh(\alpha\min_r)}{2}\right)/\alpha
\end{equation*}

This results in a probability of 
\begin{align}
&\frac{\frac{\max_\phi + \min_\phi}{2} - \min_{\phi}}{2\pi} \cdot \frac{\cosh(\alpha \cdot \acosh\left(\frac{\cosh(\alpha\max_r) + \cosh(\alpha\min_r)}{2}\right)/\alpha) - \cosh(\alpha \min_{r})}{\cosh (\alpha R) - 1}\\
&= \frac{\max_\phi + \min_\phi - 2\min_{\phi}}{4\pi} \cdot \frac{\cosh(\acosh\left(\frac{\cosh(\alpha\max_r) + \cosh(\alpha\min_r)}{2}\right)) - \cosh(\alpha \min_{r})}{\cosh (\alpha R) - 1}\\
&= \frac{\max_\phi - \min_{\phi}}{4\pi} \cdot \frac{\frac{\cosh(\alpha\max_r) + \cosh(\alpha\min_r)}{2} - \cosh(\alpha \min_{r})}{\cosh (\alpha R) - 1}\\
&= \frac{\max_\phi - \min_{\phi}}{4\pi} \cdot \frac{\cosh(\alpha\max_r) + \cosh(\alpha\min_r) - 2\cosh(\alpha \min_{r})}{2(\cosh (\alpha R) - 1)}\\
&= \frac{1}{4}\frac{\max_\phi - \min_{\phi}}{2\pi} \cdot \frac{\cosh(\alpha\max_r) - \cosh(\alpha\min_r)}{\cosh (\alpha R) - 1}\\
&= \frac{1}{4} \Pr(p \in C_i)
\end{align}
As per the induction hypothesis, $\Pr(p \in C_i)$ is $4^{-i}$ and $\Pr(p \in SW)$ is thus $\frac{1}{4}\cdot 4^{-i} = 4^{-(i+1)}$.
Due to symmetry when selecting $\mathrm{mid}_\phi$, the same holds for the south east child of $C_i$. Together, they contain half of the probability mass of $C_i$.
Again due to symmetry, the same proof then holds for the northern children as well.
\qed
\end{proof}

\section{Proof of Lemma~\ref{thm:quadtree-height}}
We say ``with high probability'' when referring to a probability of at least $1 - 1/n$ (for sufficiently large $n$).
While previous results exist for the height and cost of two-dimensional quadtrees~\cite{Samet:2005:FMM:1076819}, these quadtrees differ from our polar hyperbolic approach in important properties and the results are
not easily transferable. For example, we adjust the size of our quadtree cells to result in an equal division of probability mass when taking the hyperbolic geometry into account, see Lemma~\ref{lemma:quadtree-cell-probability}.
We thus make use of a lemma from the theory of \emph{balls into bins} instead:

\begin{lemma}[\cite{mitzenmacher2005probability}]
\label{lem:balls-into-bins}
When $n$ balls are thrown independently and uniformly at random into $n$ bins, the probability that the 
maximum load is more than $3 \ln n / \ln \ln n$ is at most $1/n$ for $n$ sufficiently large.
\end{lemma}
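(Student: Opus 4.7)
The plan is to follow the standard union-bound argument for balls-into-bins and then carefully verify the asymptotic calculation at the threshold $k := \lceil 3 \ln n / \ln \ln n \rceil$. First I would fix an arbitrary bin $i$ and bound the probability that bin $i$ receives at least $k$ balls. Since each ball independently lands in bin $i$ with probability $1/n$, we have $\Pr[\text{load}(i) \geq k] \leq \binom{n}{k} (1/n)^k$, using a union bound over the $\binom{n}{k}$ choices of which balls land in bin $i$ (and dropping the factor for the remaining balls, which only decreases the probability).

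Next I would apply the standard inequality $\binom{n}{k} \leq (en/k)^k$ to obtain $\Pr[\text{load}(i) \geq k] \leq (e/k)^k$. A union bound over all $n$ bins then gives $\Pr[\max\text{-load} \geq k] \leq n\,(e/k)^k$. The goal is to show this is at most $1/n$, equivalently $(e/k)^k \leq 1/n^2$, equivalently
\begin{equation}
k \bigl( \ln k - 1 \bigr) \geq 2 \ln n.
\end{equation}

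This asymptotic check is the only substantive step, and I expect it to be the main (though routine) obstacle. With $k = 3 \ln n / \ln \ln n$, one computes $\ln k = \ln 3 + \ln \ln n - \ln \ln \ln n$, so $\ln k - 1 = (1 - o(1)) \ln \ln n$ as $n \to \infty$. Therefore
\begin{equation}
k (\ln k - 1) \;=\; \frac{3 \ln n}{\ln \ln n} \cdot (1 - o(1)) \ln \ln n \;=\; (3 - o(1)) \ln n,
\end{equation}
which exceeds $2 \ln n$ once $n$ is large enough. Plugging back yields $n \,(e/k)^k \leq n \cdot n^{-3 + o(1)} \leq 1/n$ for sufficiently large $n$, proving the bound. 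I would close by noting that rounding $k$ up to an integer only strengthens the inequality, and that the phrase ``more than $3 \ln n / \ln \ln n$'' matches the event $\text{load} \geq k$ for this choice of $k$.
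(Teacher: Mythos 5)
Your proof is correct and is essentially the standard union-bound argument from the cited source (Mitzenmacher--Upfal, Lemma 5.1): bound a single bin's load by $\binom{n}{k}(1/n)^k \le (e/k)^k$, union-bound over the $n$ bins, and verify that $k(\ln k - 1) \ge 2\ln n$ at the threshold $k = 3\ln n/\ln\ln n$. The paper itself does not reprove this lemma but imports it by citation, so there is nothing to compare beyond noting that your argument matches the textbook proof; the asymptotic check and the remark about rounding $k$ up are both handled correctly.
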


\begin{proof}[of Lemma~\ref{thm:quadtree-height}]
In a complete quadtree, $4^k$ cells exist at height $k$. For analysis purposes only, we construct such 
a complete but initially empty quadtree of height $k = \lceil \log_4(n)\rceil$, which has at least $n$ leaf cells.
As seen in Lemma~\ref{thm:node-cell-probabilities}, a given point has an equal chance to land in each leaf cell.
Hence, we can apply Lemma~\ref{lem:balls-into-bins} with each leaf cell being a bin and a point being a ball.
(The fact that we can have more than $n$ leaf cells only helps in reducing the average load.)
From this we can conclude that, for $n$ sufficiently large, 
no leaf cell of the current tree contains more than $3 \ln n / \ln \ln n$ points with high probability (whp). 
Even if we had to construct a subtree below a current leaf $l$ to store points whose number exceeds the
capacity of $l$, the height of this subtree cannot exceed the number of points in the corresponding area,
which is at most $3 \ln n / \ln \ln n$ whp. Consequently, the total quadtree height does not 
exceed $O(\log n)$ whp.

Let $T'$ be the quadtree as constructed in the previous paragraph, starting with a complete quadtree of height~$k$ and splitting leaves when their capacity is exceeded.
Let $T$ be the quadtree created in our algorithm, starting with a root node, inserting points and also splitting leaves when necessary, growing the tree downward.

Since both trees grow downward as necessary to accomodate all points, but $T$ does not start with a complete quadtree of height~$k$, the set of quadtree nodes in $T$ is a subset of the quadtree nodes in $T'$.
Consequently, the height of $T$ is bounded by $O(\log n)$ whp as well.
\qed
\end{proof}

\section{Proof of Lemma~\ref{lemma:bound-neighbourless-leaf-cells}}
\begin{proof}
As done previously, we denote leaf cells that do not have non-leaf siblings as \emph{bottom leaf cells}, see Figure~\ref{fig:visualization-bottom-leaf-cells} for an example.
The following proof is done for a leaf capacity of one.
Since a larger leaf capacity does not increase the tree height and adds only a constant factor to the cost of examining a leaf, this choice does not result in a loss of generality.

Let $L$ be the set of bottom leaf cells containing a vertex in $A$ and let $Q$ be the set of bottom leaf cells examined by the range query.
Since the contents of leaf cells are disjoint, $|L| \leq |A|$ holds.
The set $Q \without L$ consists of leaf cells which are examined by the range query but yield no points in $A$.
These are empty leaf cells within the query circle as well as cells cut by the circle boundary.

Empty leaf cells occur when a previous leaf cell is split since its capacity is exceeded by at least one point.
Therefore an empty leaf cell $a$ in the interior of the query circle only occurs when at least two points happened to be allocated to its parent cell $b$.
A split caused by two points creates four leaf cells, therefore there are at most twice as many empty bottom leaf cells as points.

The number of cells cut by the boundary can be bounded with a geometric argument.
On level $k = \lceil\log_4 n \rceil$, at most $4^k$ cells exist, defined by at most $2^{k}$ angular and $2^{k}$ radial divisions.
When following the circumference of a query circle, each newly cut leaf cell requires the crossing of an angular or radial division.
Each radial and angular coordinate occurs at most twice on the circle boundary, thus each division can be crossed at most twice.
With two types of divisions, the circle boundary crosses at most $2\cdot2\cdot 2^k = 4\cdot2^{\lceil\log_4 n \rceil}$ cells on level $k$.
Since the value of $4\cdot2^{\lceil\log_4 n \rceil}$ is smaller than $4\cdot2^{1+\log_4 n}$, this yields $< 8\cdot \sqrt{n}$ cut cells.

In a balanced tree, all cells on level $k$ are leaf cells and the bound calculated above is an upper bound for $|Q \without L|$.
For the general case of an unbalanced tree, we use auxiliary Lemma~\ref{lemma:leaf-cell-descendants-bound}, which bounds to $O(\sqrt{n})$ the number of bottom leaf cells descendant from cells cut in level $k$.
\qed
\end{proof} 

\begin{figure}
\centering
 \includegraphics[width=.45\linewidth]{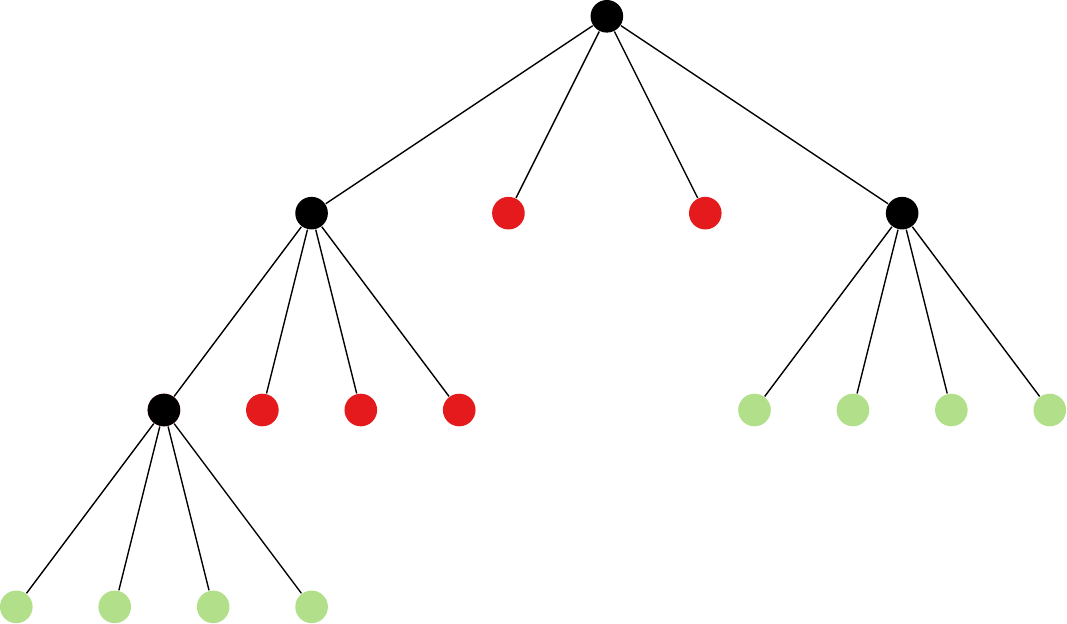}
 \caption{Visualization of bottom leaf cells in a quadtree. Bottom leaf cells are marked in green, non-bottom leaf cells in red and interior cells in black.}
 \label{fig:visualization-bottom-leaf-cells}
\end{figure}

\begin{lemma}
 Let $T$, $n$, $R$, $\mathbb{D}_R$ and $\alpha$ be as in Lemma~\ref{thm:quadtree-height}.
 Let $k := \lceil\log_4 n\rceil$ and let $C$ be a set of $\lfloor c\cdot\sqrt{n}\rfloor$ quadtree cells at level $k$, for $c \geq 1$.
 The total number of bottom leaf cells among the descendants of cells in $C$ is then bounded by $4c\cdot\sqrt{n}$ whp.
 \label{lemma:leaf-cell-descendants-bound}
\end{lemma}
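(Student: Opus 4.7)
The plan is to combine a Chernoff concentration bound on the number of input points lying in $\bigcup C$ with a combinatorial argument that charges each bottom leaf cell to at least half a point of $C$.

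I would first bound the number $Y$ of the $n$ input points that land in some cell of $C$. By Lemma~\ref{thm:node-cell-probabilities} a cell at level $k=\lceil\log_4 n\rceil$ carries probability mass exactly $4^{-k}\leq 1/n$, and the cells of $C$ are pairwise disjoint; hence $Y$ is a sum of $n$ independent $0/1$ indicators with $\ex{Y} = n\cdot|C|\cdot 4^{-k} \leq |C| \leq c\sqrt{n}$. A standard multiplicative Chernoff bound then yields $Y \leq 2c\sqrt{n}$ with failure probability at most $\exp(-c\sqrt{n}/3) = o(1/n)$ for $n$ sufficiently large, so whp $Y\leq 2c\sqrt{n}$.

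Next I would translate bottom leaf cells into a count of points using the tree structure. Call an internal quadtree node \emph{bottom-internal} if all four of its children are leaves; by the definition of bottom leaf cell, the bottom leaves descended from $C$ are exactly the children of bottom-internal nodes descended from cells of $C$, so their number is $4$ times the number of such bottom-internal nodes. Because the children of a bottom-internal node are all leaves, no bottom-internal node can be an ancestor of another, so distinct bottom-internal nodes root pairwise disjoint subtrees, and in particular have disjoint point sets. Since a leaf is split only when its capacity (taken as one, as in the preceding proof) is exceeded, each bottom-internal node holds at least two points. This gives at most $Y/2$ bottom-internal nodes among the descendants of cells of $C$, and hence at most $2Y \leq 4c\sqrt{n}$ bottom leaf cells descended from $C$ whp.

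The main obstacle I anticipate is the bookkeeping in the structural step: one must interpret ``descendants'' consistently so that a cell of $C$ that is itself a leaf -- possibly a bottom leaf of a sibling quadruple straddling the boundary of $C$ -- is treated correctly. Taking ``descendants'' strictly, and if necessary absorbing an additive $|C|\leq c\sqrt{n}$ term into the constant, resolves this cleanly. The Chernoff step and the ``at least two points per bottom-internal node'' observation are both routine once the right objects have been isolated, and the clean factor of $4$ in the lemma statement comes precisely from the fact that bottom leaves always appear in complete sibling quadruples.
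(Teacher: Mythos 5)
Your proof is correct and follows the same two-step decomposition as the paper's: first show that whp at most $2c\sqrt{n}$ of the $n$ points fall into $\bigcup C$, then charge bottom leaf cells to points via the observation that a leaf splits only when it holds at least two points, so there are at most $c\sqrt{n}$ split nodes and hence at most $4c\sqrt{n}$ bottom leaves. The one genuine difference is in the probabilistic step: the paper bounds the binomial tail with the Arratia--Gordon relative-entropy inequality and then needs a root-test limit computation to verify that $n\cdot\Pr(X\geq 2c\sqrt{n})\to 0$, whereas you invoke a standard multiplicative Chernoff bound with $\delta=1$ against the upper bound $\mu\leq c\sqrt{n}$ on the mean, getting failure probability $\exp(-c\sqrt{n}/3)$ in one line. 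Your route is shorter, avoids the paper's minor awkwardness with the points $n\in\{c^2,4c^2\}$ where its expression is undefined, and yields a much stronger (exponentially small rather than $1/n$) failure bound. You are also more careful than the paper on the combinatorial side: making explicit that bottom leaves come in complete sibling quadruples under pairwise non-nested ``bottom-internal'' parents with disjoint point sets, and flagging the edge case of a cell of $C$ that is itself a bottom leaf of a quadruple rooted above level $k$ -- a case the paper silently ignores, though it only affects the constant and is harmless for the $O(\sqrt{n})$ use in Lemma~\ref{lemma:bound-neighbourless-leaf-cells}.
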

\begin{proof}
New leaf cells are only created if a point is inserted in an already full cell.
We argue similarly to Lemma~\ref{lemma:bound-neighbourless-leaf-cells} that the descendants contain at most twice as many empty bottom leaf cells as points.
The number of points in the cells of $C$ is a random variable, which we denote by $X$.
Since each point position is drawn independently and is equally likely to land in each cell at a given level (Lemma~\ref{lemma:quadtree-cell-probability}), $X$ follows a binomial distribution:
\begin{equation}
X \sim B\left(n, \frac{\lfloor c\cdot \sqrt{n}\rfloor}{4^k}\right)
\end{equation}
For ease of calculation, we define a slightly different binomial distribution $Y \sim B(n, \frac{c\cdot \sqrt{n}}{n})$. Since $n \leq 4^k$ and $c\cdot \sqrt{n} \geq \lfloor c\cdot \sqrt{n}\rfloor$, the tail bounds calculated for $Y$ also hold for $X$.

Let $H(\frac{2c\cdot \sqrt{n}}{n},\frac{c\cdot \sqrt{n}}{n})$ be the relative entropy (also known as Kullback-Leibler divergence) of the two Bernoulli distributions $B(\frac{2c\cdot \sqrt{n}}{n})$ and $B(\frac{c\cdot \sqrt{n}}{n})$.
We can then use a tail bound from \cite{ArratiaGordon1989} to gain an upper bound for the probability that more than $2c$ points are in the cells of $C$:
\begin{equation}
 \Pr(Y \geq 2c\cdot\sqrt{n}) \leq \exp\left(-nH\left(\frac{2c\cdot\sqrt{n}}{n},\frac{c\cdot\sqrt{n}}{n}\right)\right)
 \label{eq:binomial-tail-bound}
\end{equation}

For consistency with our previous definition of ``with high probability'', we need to show that $\Pr(Y \geq 2c\sqrt{n}) \leq 1/n$ for $n$ sufficiently large.
To do this, we interpret $\Pr(Y \geq 2c\cdot\sqrt{n}) / (1/n)$ as an infinite sequence and observe its behavior for $n \rightarrow \infty$.
Let $a_n := \Pr(Y \geq 2c\cdot\sqrt{n}) / (1/n) = n\cdot \Pr(Y \geq 2c\cdot\sqrt{n})$ and $b_n := n\cdot\exp\left(-nH\left(\frac{2c\cdot\sqrt{n}}{n},\frac{c\cdot\sqrt{n}}{n}\right)\right)$.
From Eq.~\eqref{eq:binomial-tail-bound} we know that $a_n \leq b_n$.

Using the definition of relative entropy, we iterate over the two cases (point within $C$, point not in $C$) for both Bernoulli distributions and get:
\begin{align}
b_n &= n\cdot \exp\left(-n H\left(\frac{2c \sqrt{n}}{n},\frac{c \sqrt{n}}{n}\right) \right) \\
 &= n\cdot \exp\left(-n \left(\left(\frac{2c}{\sqrt{n}}\right)\ln 2 + \left(1-\frac{2c}{\sqrt{n}}\right) \ln \frac{1 - \frac{2c \sqrt{n}}{n}}{1-\frac{c \sqrt{n}}{n}}\right)\right)\\
 &= n\cdot \exp\left(-n \frac{2c}{\sqrt{n}}\ln 2\right) \cdot \exp\left(-n \left(1-\frac{2c}{\sqrt{n}}\right) \ln \frac{\sqrt{n}-2c}{\sqrt{n}-c}\right)\\
 &= n\cdot \exp\left(-2c\sqrt{n}\ln 2\right) \cdot \exp\left(\left(n-2c\sqrt{n}\right) \ln \frac{\sqrt{n}-c}{\sqrt{n}-2c}\right)\\
 &= n\cdot \frac{1}{2^{2c\sqrt{n}}} \cdot \left(\frac{\sqrt{n}-c}{\sqrt{n}-2c}\right)^{n-2c\sqrt{n}}\\
 &= n\cdot \frac{1}{4^{c\sqrt{n}}} \cdot \left(1+ \frac{c}{\sqrt{n}-2c}\right)^{n-2c\sqrt{n}}
\end{align}

(While $b_n$ is undefined for $n \in \{c^2, 4c^2\}$, we only consider \emph{sufficiently large $n$} from the outset.)

We apply a variant of the root test and consider the limit $\lim_{n \rightarrow \infty} (b_n)^{\frac{1}{\sqrt{n}}}$ for an auxiliary result:
\begin{align}
 &\lim_{n \rightarrow \infty} \left(n\cdot \frac{1}{4^{c\sqrt{n}}} \cdot \left(1+ \frac{c}{\sqrt{n}-2c}\right)^{n-2c\sqrt{n}}\right)^{\frac{1}{\sqrt{n}}}\\
 =& \lim_{n \rightarrow \infty} n^{\frac{1}{\sqrt{n}}}\cdot \frac{1}{4^{c}} \cdot \left(1+ \frac{c}{\sqrt{n}-2c}\right)^{\sqrt{n}}\left(1+ \frac{c}{\sqrt{n}-2c}\right)^{-2c}\\
 =& 1\cdot \frac{1}{4^{c}} \cdot e^c\cdot 1 = \left(\frac{e}{4}\right)^c
\end{align}

From $e/4 < 0.7$, $c \geq 1$ and the limit definition, it follows that almost all elements in $(b_n)^{\frac{1}{\sqrt{n}}}$ are smaller than 0.7 and thus almost all elements in $b_n$ are smaller than $0.7^{\sqrt{n}}$.
Thus $\lim_{n \rightarrow \infty} b_n \leq \lim_{n \rightarrow \infty} 0.7^{\sqrt{n}} = 0$.
Due to Eq.~\eqref{eq:binomial-tail-bound}, we know that $a_n$ is smaller than $b_n$ for large $n$,
and therefore that the number of points in $C$ is smaller than $2c\cdot \sqrt{n}$ with probability at most $1/n$ for $n$ sufficiently large.
Again with high probability, this limits the number of non-leaf cells in $C$ to $c\cdot \sqrt{n}$ and thus the number of bottom leaf cells to $4c\cdot \sqrt{n}$, proving the claim.
\qed
\end{proof}

\pagebreak
\section{Further Graph Property Parameter Studies}
\begin{figure}[h]
\begin{subfigure}[t]{.3\linewidth}
\includegraphics[width=\linewidth]{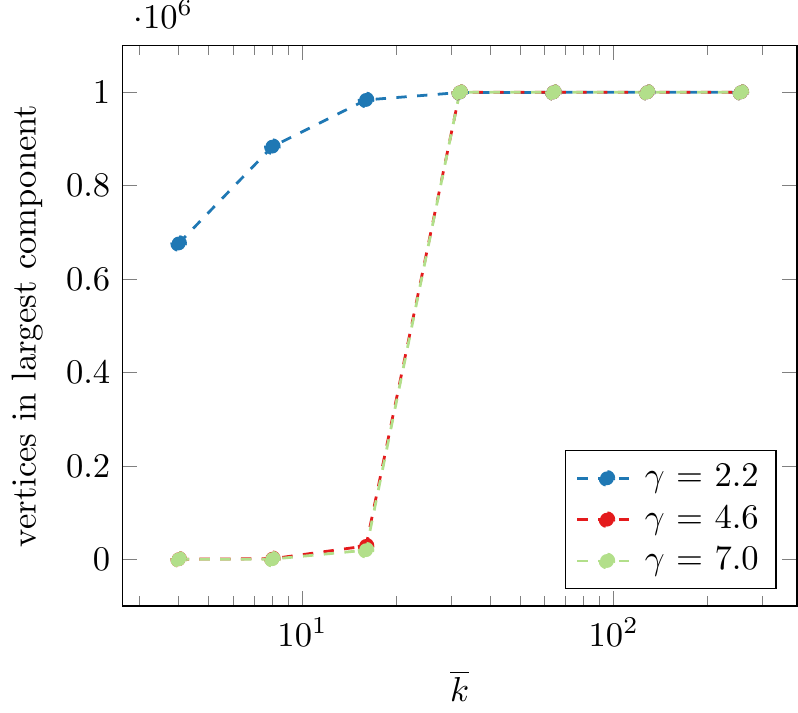}
\caption{Vertices in largest component}
\label{plot:size-of-largest}
\end{subfigure}
\quad
\begin{subfigure}[t]{.3\linewidth}
\includegraphics[width=\linewidth]{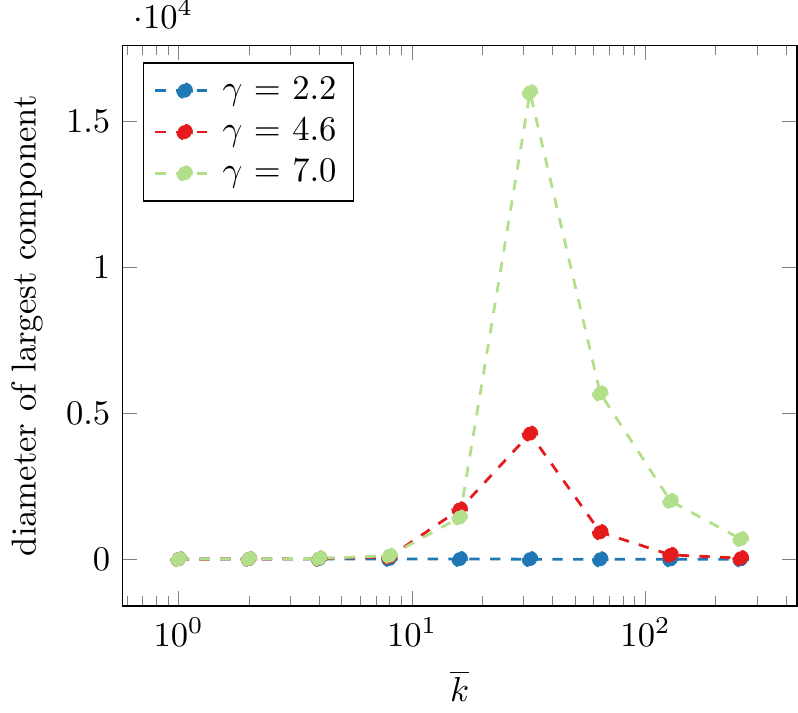}
\caption{Diameter of largest component}
\label{plot:diameter}
\end{subfigure}
\quad
\begin{subfigure}[t]{.3\linewidth}
\includegraphics[width=\linewidth]{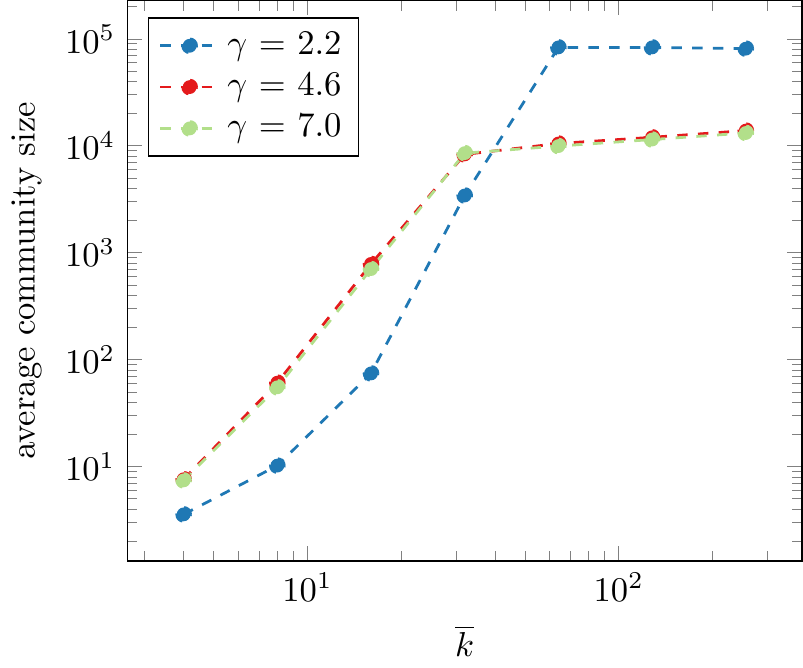}
\caption{Avg. community size}
\label{plot:ncoms}
\end{subfigure}

\begin{subfigure}[t]{.3\linewidth}
\includegraphics[width=\linewidth]{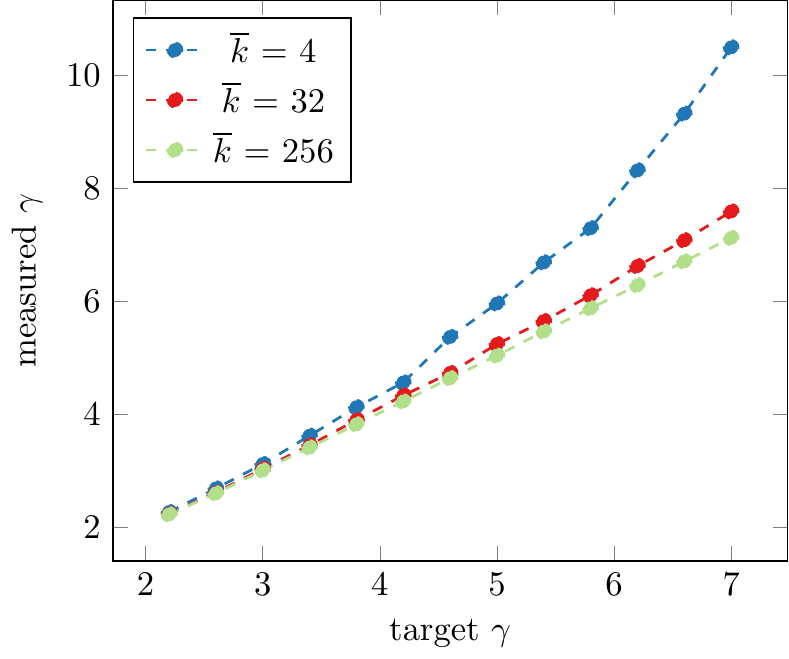}
\caption{Measured vs. desired $\gamma$}
\label{plot:gamma}
\end{subfigure}
\begin{subfigure}[t]{.3\linewidth}
\includegraphics[width=\linewidth]{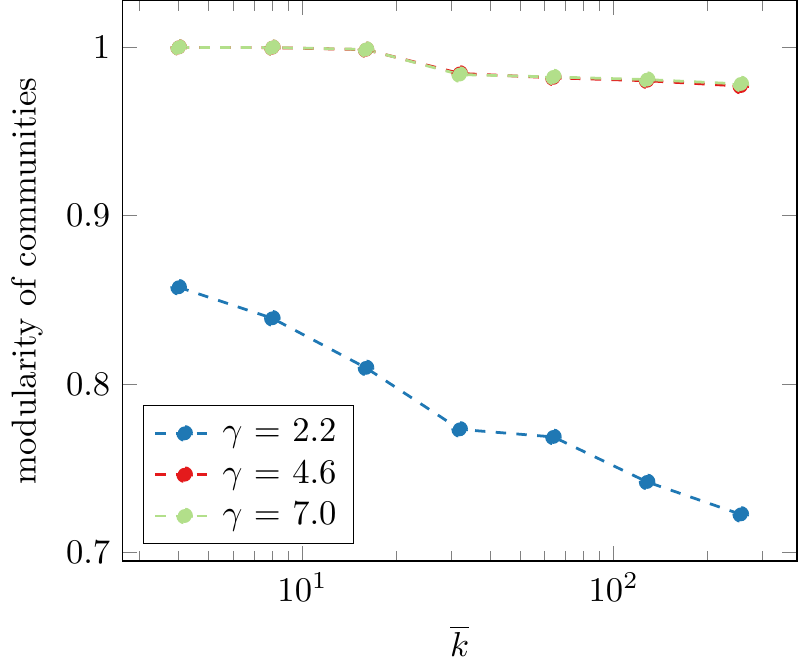}
\caption{Modularity of communities.}
\label{plot:modularity}
\end{subfigure}
\quad
\begin{subfigure}[t]{.3\linewidth}
\includegraphics[width=\linewidth]{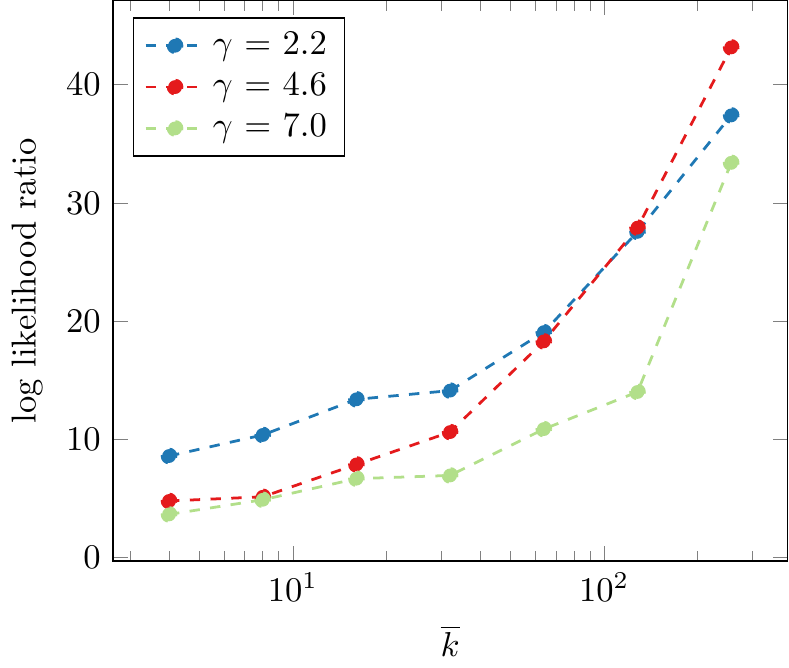}
\caption{Loglikelihood of power-law degree distribution}
\label{plot:power-law-ll}
\end{subfigure}
\caption{Further parameter studies, omitted from Figure~\ref{plot:properties} due to space constraints. Values are averaged over 10 runs, except for the diameter.}
\label{plot:additional-properties}
\end{figure}
\FloatBarrier
\pagebreak
\section{Properties of Some Real Networks}
\begin{savenotes}
\begin{table}
\caption{Properties of various real networks. The columns show the number of vertices and edges, the clustering coefficient, the maximum core number, the log likelihood of a power-law degree distribution,
the exponent of an optimal power-law fit, the degree assortativity, the diameter, average size of communities and modularity of the community structure.}
\label{table:real-graphs}
\smallskip 
\small
\centering
\begin{tabular}{l|c|c|c|c|c|c|c|c|c|c}
  &		n&	m&	cc&	max core&pl ll&	$\gamma$&	deg.ass.&	diameter&	comm. size&	mod.\\
 \hline
\scriptsize{PGPgiantcompo} &	10K&	24K&	0.44&	31&	2.04&	4.41&	0.23&	24&		101&		0.88\\
\scriptsize{fb-Texas84} &	36K&	1.6M&	0.19&	81&	1.54&	4.8&	0&	7-8&		1894&		0.38\\
\scriptsize{caidaRouterLevel} & 192K&	609K&	0.19&	32&	6.73&	3.46&	0.02&	26-30&		365&		0.85\\
\scriptsize{citationCiteseer} &268K&	1M&	0.21&	15&	9.6&	3.0&	-0.05&	36-40&		1861&		0.80\\
\scriptsize{coPapersDBLP} &	540K&	15M&	0.81&	336&	4.04&	5.95&	0.50&	23-24&		2842&		0.84\\
\scriptsize{as-Skitter} &	1.7M&	11M&	0.3&	111&	20.3&	2.35&	-0.08&	31-40&		1349&		0.83\\
\scriptsize{soc-LiveJournal} &4.8M&	43M&	0.36&	373&	6.94&	3.34&	0.02&	19-24&		632&		0.75\\
\scriptsize{uk-2002} &		18.5M&	261M&	0.69&	943&	331&	2.45&	-0.02&	45-48&		441&		0.98\\
\scriptsize{wiki\_link\_en} & 	27M & 	547M & 	0.10&	122&	26&	3.41&	-0.05&	-& 		21.6&		0.67\\
\end{tabular}
\end{table}
\end{savenotes}
\FloatBarrier
\section{Comparison with Previous Implementation\cite{aldecoa2015hyperbolic}}
Both implementations sample random graphs, making a direct comparison of generated graphs difficult.
In its output files, the implementation of \cite{aldecoa2015hyperbolic} provides the hyperbolic coordinates of the generated points.
Yet, since the distance threshold $R$ is computed non-deterministically with a Monte Carlo process and not written to the log file, we do not have all necessary information to recreate the graphs exactly.
The Figures~\ref{plot:properties-comparison-I}, \ref{plot:properties-comparison-II} and \ref{plot:properties-comparison-III} show properties of the generated graphs instead, averaged over 10 runs.
Plots showing graphs created with the implementation of \cite{aldecoa2015hyperbolic} are on the left, plots created with our implementation are on the right.
Some random fluctuations are visible, but for almost all properties the averages of our implementation are very similar to the implementation of \cite{aldecoa2015hyperbolic}.
The measured values of $\gamma$ for thin graphs and various target $\gamma$s differs from the previous implementation, but the fluctuation within the measurements of each implementations are sufficiently strong that it leads us to assume some measurement noise.
The differences between the implementations are smaller than the variations within one implementation.
\begin{figure}
\begin{subfigure}[t]{.45\linewidth}
\includegraphics[width=\linewidth]{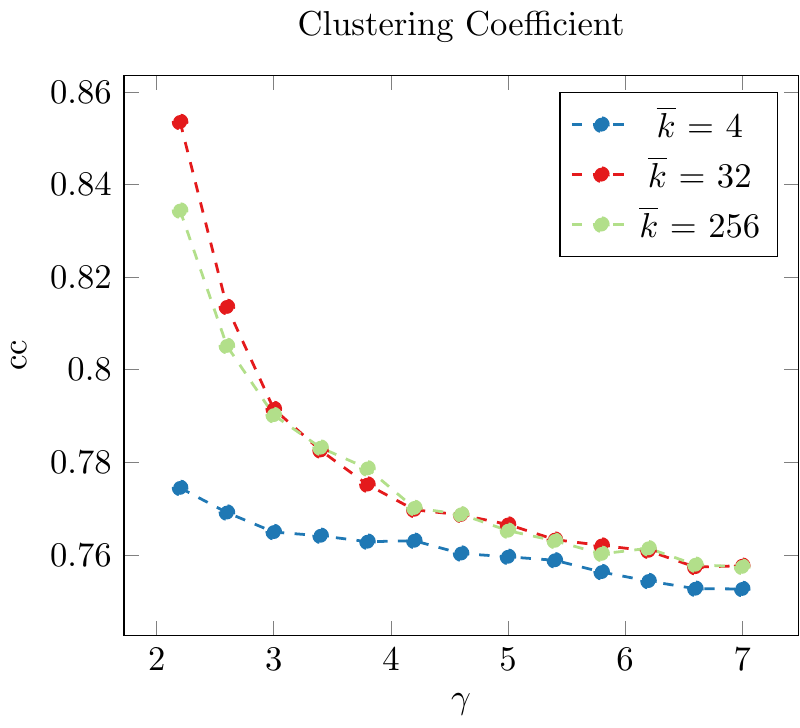}
\label{plot:native-clustercoeff}
\end{subfigure}
\quad
\begin{subfigure}[t]{.45\linewidth}
\includegraphics[width=\linewidth]{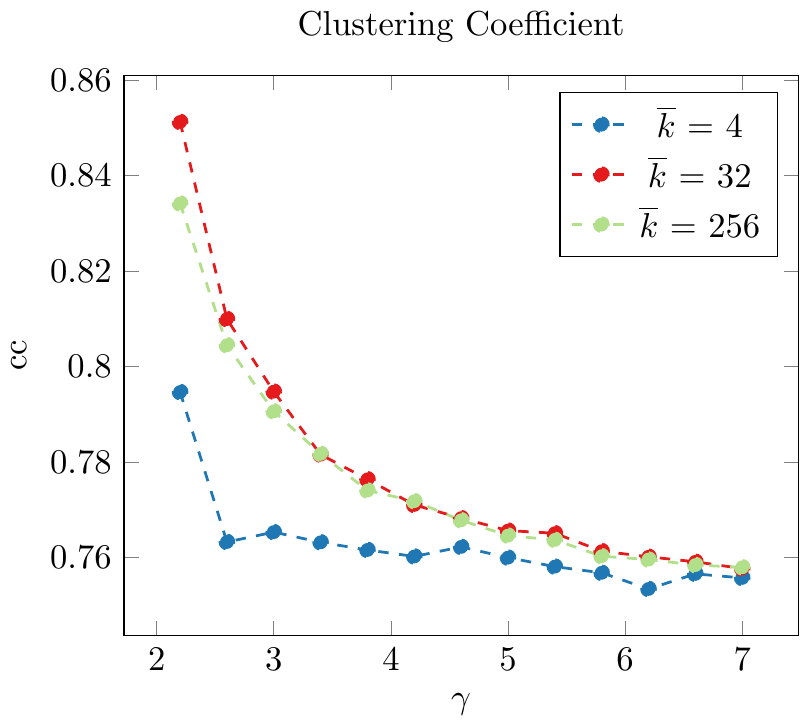}
\label{plot:comparison-clustercoeff}
\end{subfigure}

\begin{subfigure}[t]{.45\linewidth}
\includegraphics[width=\linewidth]{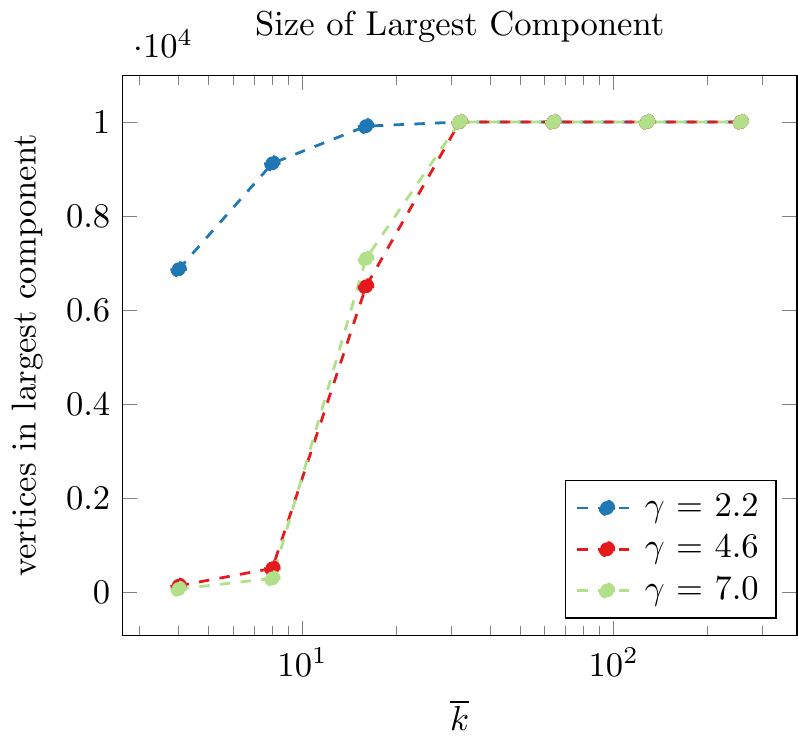}
\label{plot:native-size-of-largest}
\end{subfigure}
\quad
\begin{subfigure}[t]{.45\linewidth}
\includegraphics[width=\linewidth]{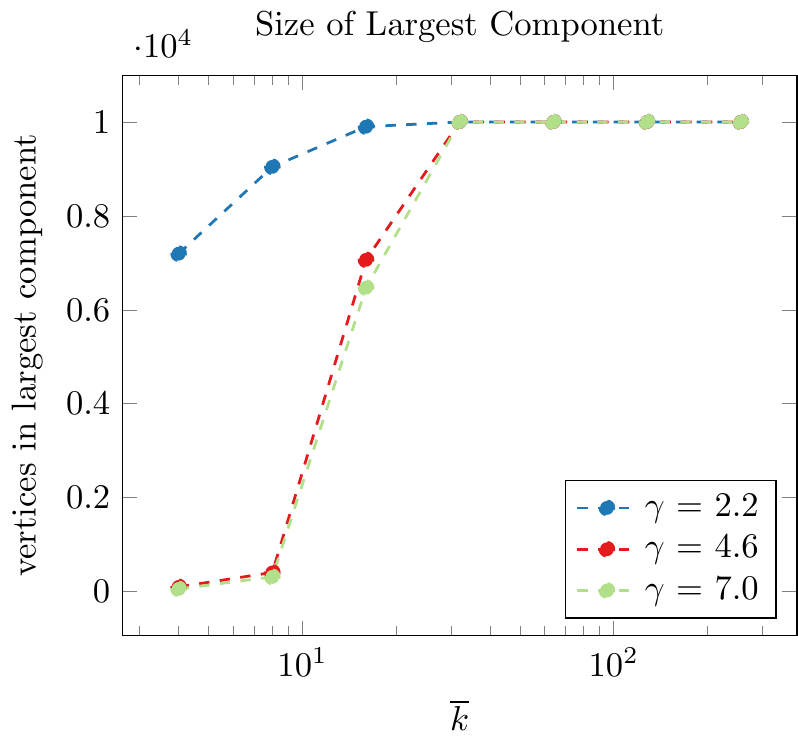}
\label{plot:comparison-size-of-largest}
\end{subfigure}

\begin{subfigure}[t]{.45\linewidth}
\includegraphics[width=\linewidth]{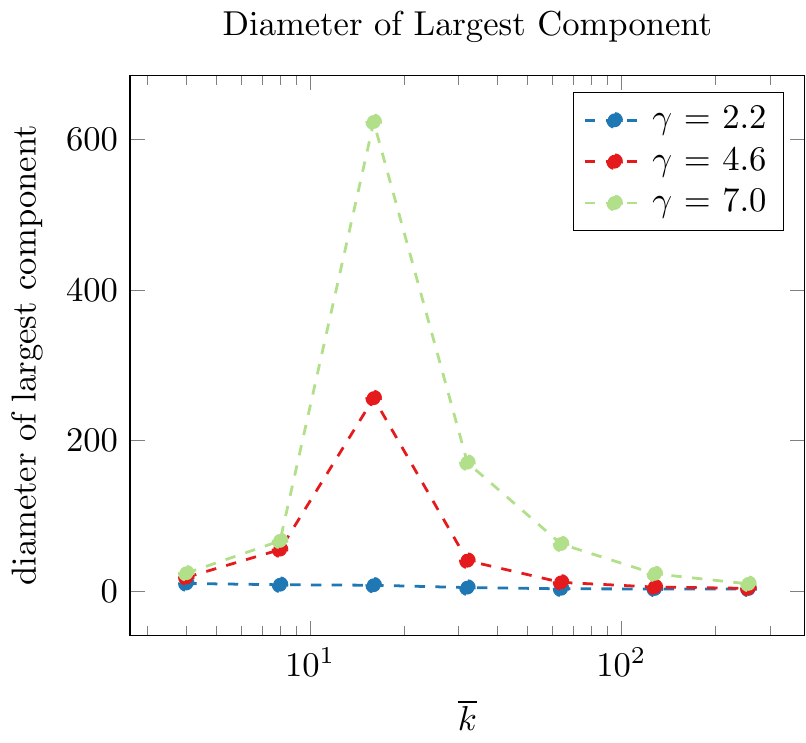}
\label{plot:native-diameter}
\end{subfigure}
\quad
\begin{subfigure}[t]{.45\linewidth}
\includegraphics[width=\linewidth]{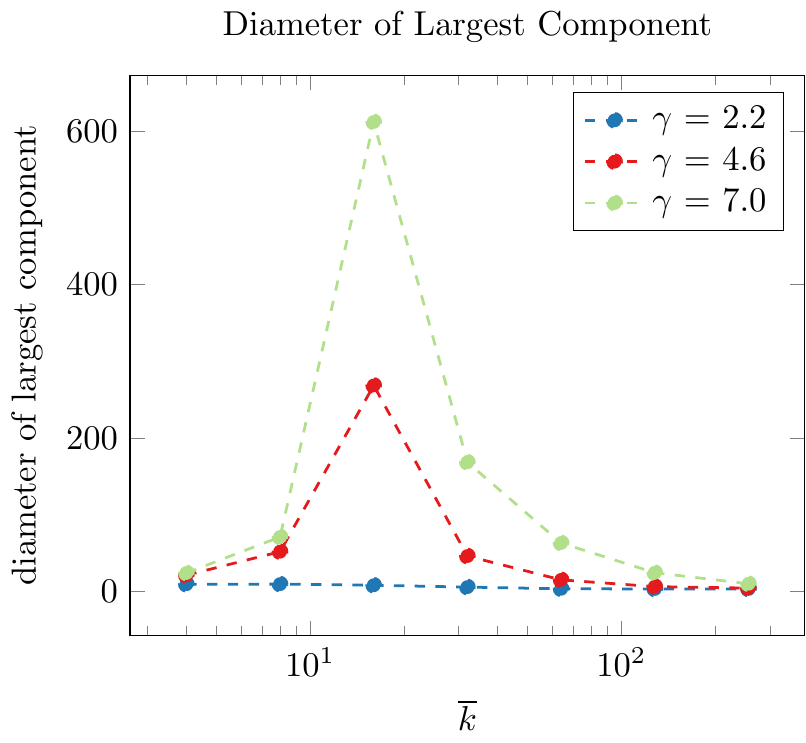}
\label{plot:comparison-diameter}
\end{subfigure}
\caption{Comparison of clustering coefficients, size of largest component and diameter of largest components for the implementation of \cite{aldecoa2015hyperbolic} (left) and our implementation (right).
Values are averaged over 10 runs.}
\label{plot:properties-comparison-I}
\end{figure}

\begin{figure}
\begin{subfigure}[t]{.45\linewidth}
\includegraphics[width=\linewidth]{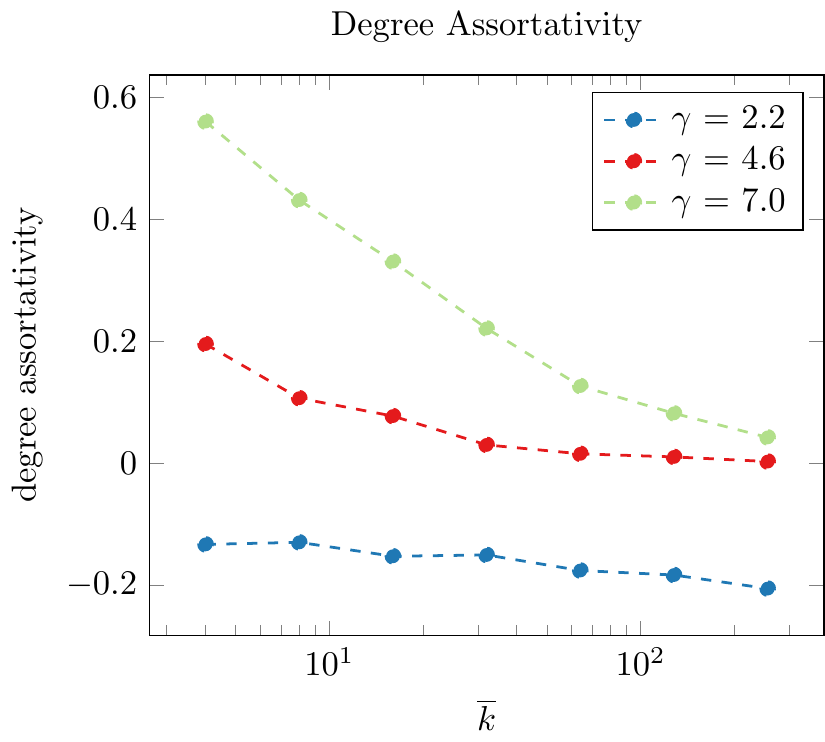}
\label{plot:native-degass}
\end{subfigure}
\quad
\begin{subfigure}[t]{.45\linewidth}
\includegraphics[width=\linewidth]{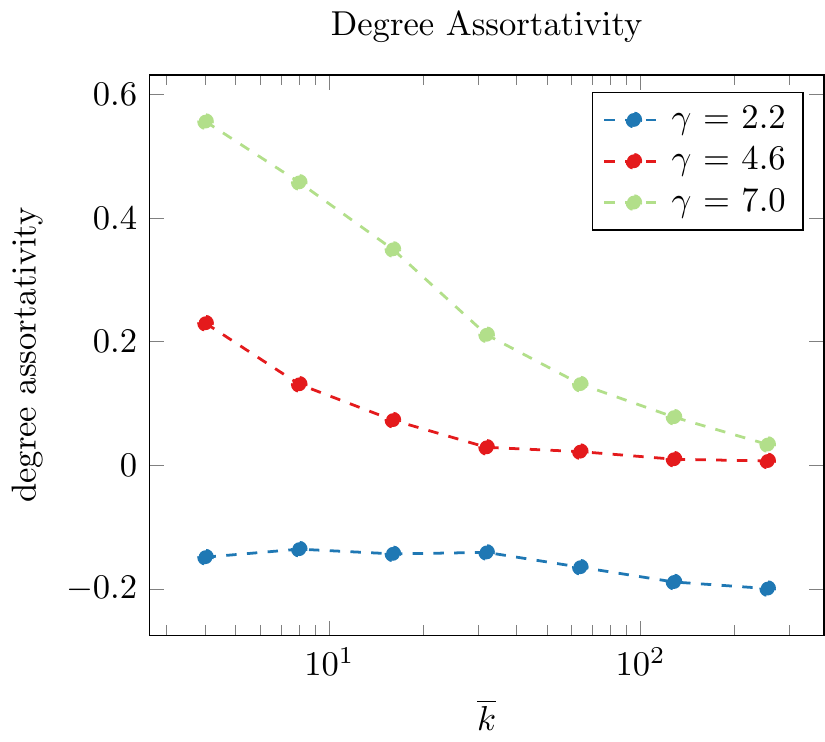}
\label{plot:comparison-degass}
\end{subfigure}

\begin{subfigure}[t]{.45\linewidth}
\includegraphics[width=\linewidth]{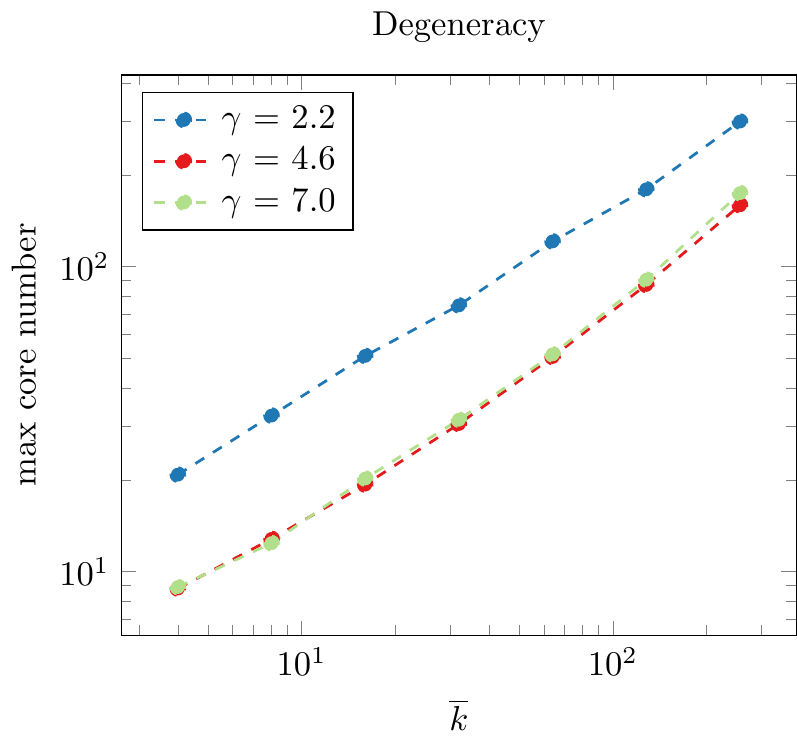}
\label{plot:native-degen}
\end{subfigure}
\quad
\begin{subfigure}[t]{.45\linewidth}
\includegraphics[width=\linewidth]{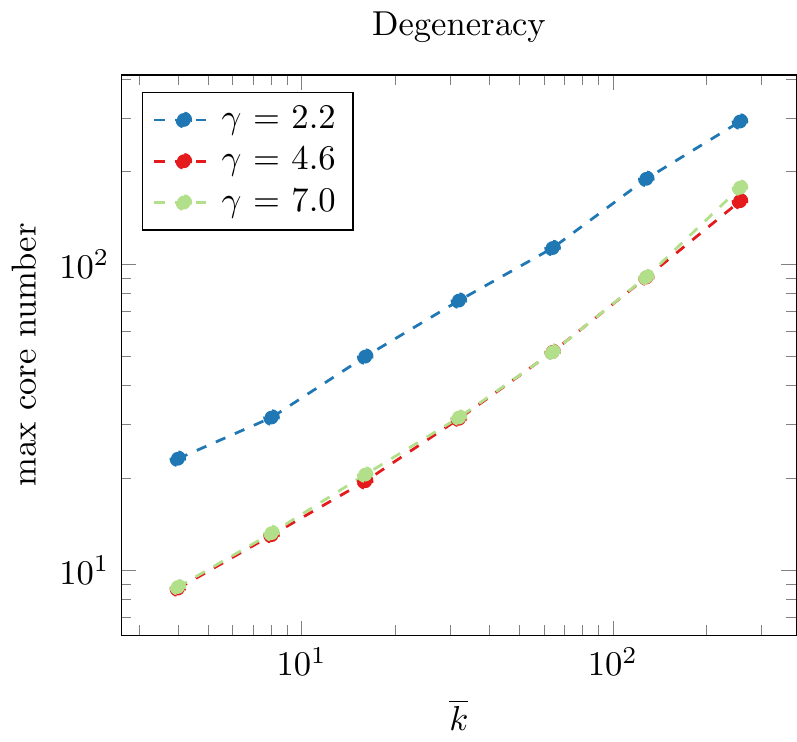}
\label{plot:comparison-degen}
\end{subfigure}

\begin{subfigure}[t]{.45\linewidth}
\includegraphics[width=\linewidth]{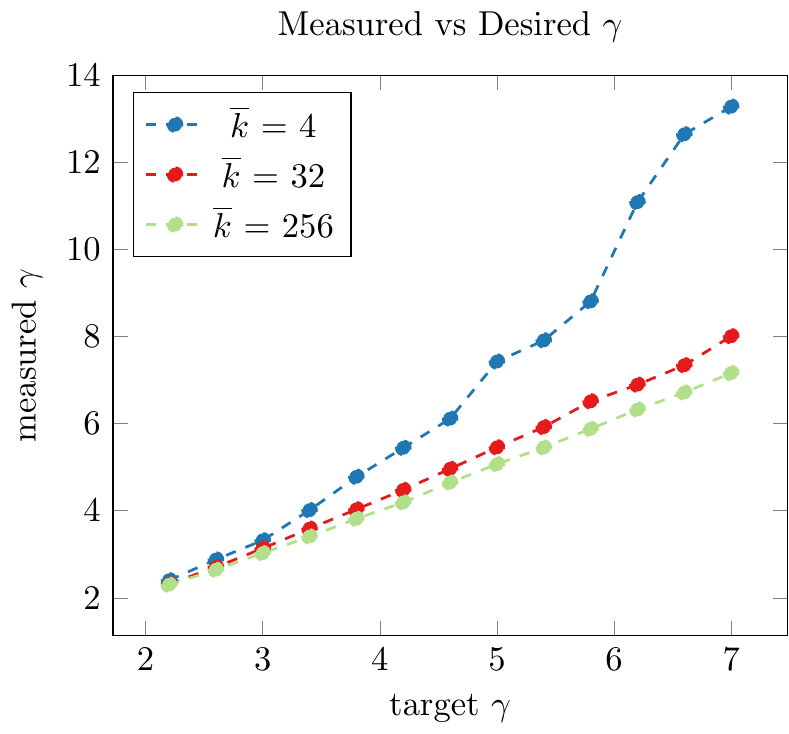}
\label{plot:native-gamma}
\end{subfigure}
\quad\begin{subfigure}[t]{.45\linewidth}
\includegraphics[width=\linewidth]{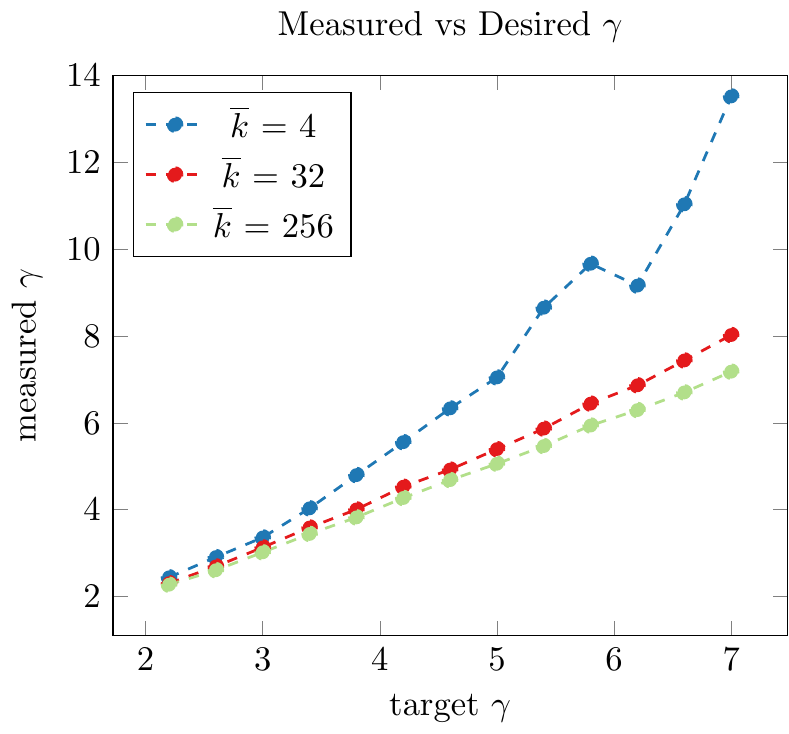}
\label{plot:comparison-gamma}
\end{subfigure}
\caption{Comparison of degree assortativity, degeneracy and measured vs desired $\gamma$ for the implementation of \cite{aldecoa2015hyperbolic} (left) and our implementation (right).
Values are averaged over 10 runs.}
\label{plot:properties-comparison-II}
\end{figure}

\begin{figure}
\begin{subfigure}[t]{.45\linewidth}
\includegraphics[width=\linewidth]{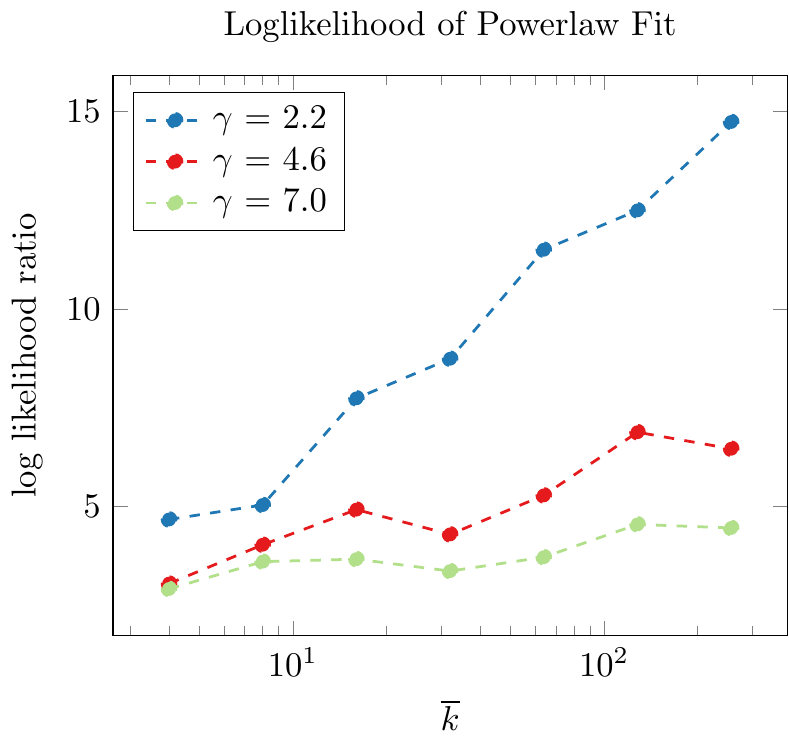}
\label{plot:native-power-law-ll}
\end{subfigure}
\quad
\begin{subfigure}[t]{.45\linewidth}
\includegraphics[width=\linewidth]{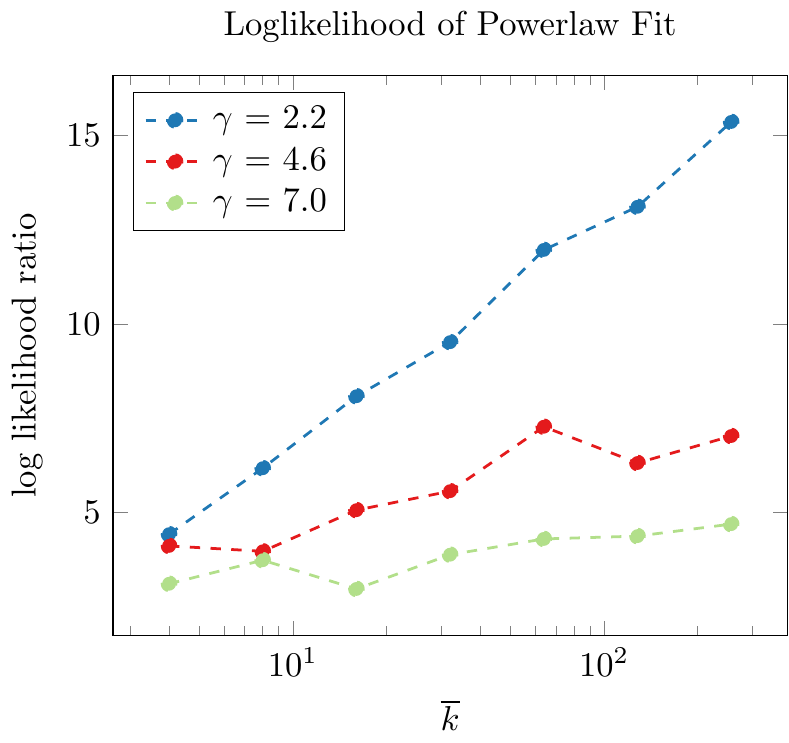}
\label{plot:comparison-power-law-ll}
\end{subfigure}
\caption{Comparison of likelihood of a power-law fit to the degree distribution for the implementation of \cite{aldecoa2015hyperbolic} (left) and our implementation (right).
Values are averaged over 10 runs.}
\label{plot:properties-comparison-III}
\end{figure}

\FloatBarrier
\section{Effect of Additional Random Edges}
\begin{figure}
\centering
\includegraphics{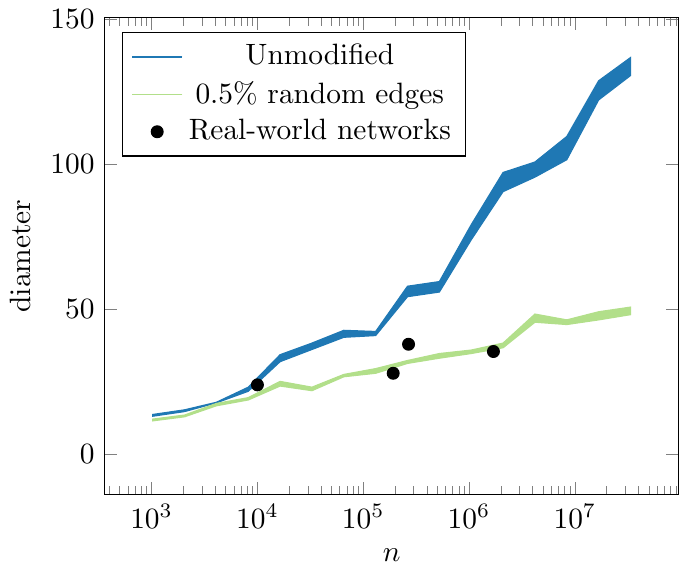}
\caption{Effect of random long-range edges on diameter ranges. Baseline graphs are generated with an average degree of 10, values are averaged over 5 runs.
Black circles correspond to PGPgiantcompo, caidaRouterLevel, citationCiteseer and as-Skitter from Table~\ref{table:real-graphs}, which have a comparable density.
}
\label{plot:long-range-diameter}
\end{figure}
\begin{figure}
\centering
\begin{subfigure}[t]{.47\linewidth}
\centering
\includegraphics{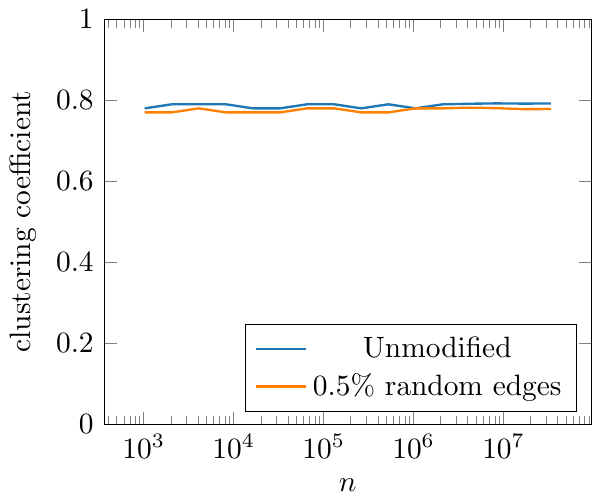}
\caption{Clustering coefficient.}
\label{plot:long-range-cc}
\end{subfigure}
\quad
\begin{subfigure}[t]{.47\linewidth}
\centering
\includegraphics{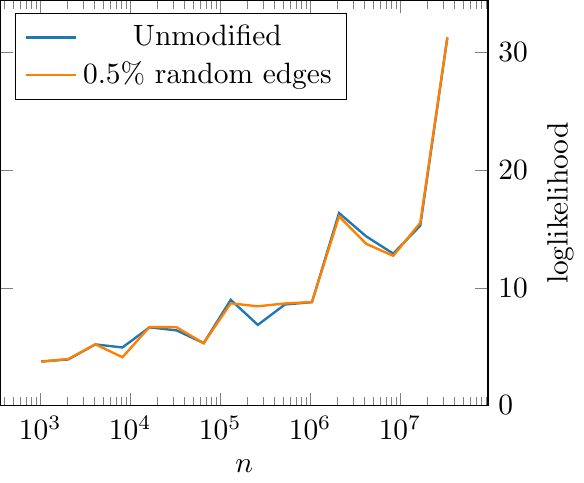}
\caption{Loglikelihood of power-law degree distribution.}
\label{plot:long-range-plll}
\end{subfigure}
\caption{Side effects of adding random edges to generated graphs. Baseline graphs are generated with an average degree of 10, values are averaged over 5 runs.
The clustering coefficient changes by less than 0.03, the two lines of the likelihood of a power-law degree distribution are nearly identical.}
\label{plot:long-range-effects}
\end{figure}
\end{document}